\documentclass[11pt]{article}
\pdfmapfile{=cm.map}
\pdfmapfile{+cmextra.map}
\pdfmapfile{+symbols.map}
\PassOptionsToPackage{dvipsnames}{xcolor}
\usepackage{fullpage}

\usepackage{hyperref}
\usepackage[utf8]{inputenc} % allow utf-8 input
\usepackage{url}            % simple URL typesetting
\usepackage{mathtools}
\usepackage{booktabs}       % professional-quality tables
\usepackage{amsfonts}       % blackboard math symbols
\usepackage{nicefrac}       % compact symbols for 1/2, etc.
\usepackage{microtype}      % microtypography
\usepackage{xcolor}         % colors
\usepackage{xspace}

\usepackage{tikz}
\usetikzlibrary{decorations.pathreplacing}
\usepackage{pict2e,picture,graphicx}
\usepackage[overload]{empheq}
\usepackage{microtype}
\usepackage{graphicx}
\usepackage{subcaption}
\usepackage{booktabs}
\usepackage{amsmath}
\usepackage{cases}
\usepackage{mathtools}
\usepackage{amsthm}
\usepackage{thm-restate}
\usepackage{enumitem}

\allowdisplaybreaks
\usepackage{xparse}
\usepackage[capitalize, noabbrev]{cleveref}
\usepackage{wrapfig}
\usepackage{nicefrac} 
\usepackage{multirow}
\usepackage{bm}
\usepackage{subcaption}

\hypersetup{
	colorlinks = true,
	linkcolor = black,
	citecolor = black,
	linktocpage = true,
	urlcolor = darkGreen
}
\usepackage{comment}

\usepackage{amssymb}
\microtypesetup{expansion=false}
\theoremstyle{plain}
\newtheorem{theorem}{Theorem}[section]

\newtheorem{lemma}[theorem]{Lemma}

\newtheorem*{theorem-non}{Theorem}
\newtheorem*{corollary-non}{Corollary}

\theoremstyle{definition}

\theoremstyle{remark}

	\definecolor{niceRed}{HTML}{BE2626}
	\definecolor{Red2}{HTML}{DB3236}
	\definecolor{mgreen}{HTML}{A0C88C}
	\definecolor{blueGrotto}{HTML}{059DC0}
	\definecolor{limeGreen}{HTML}{81B622}
	\definecolor{myellow}{HTML}{E09B23}
	\definecolor{darkGreen}{HTML}{2E8B57}
	\definecolor{navyBlueP}{HTML}{03468F}
	\definecolor{Sepia}{HTML}{7F462C}
	\definecolor{orange2}{HTML}{FF8000}
	\definecolor{mgray}{HTML}{ABB3B8}
	\definecolor{lgray}{HTML}{E5E8E9}
	\definecolor{myPurple}{HTML}{AF007C}
	\definecolor{mypurple2}{HTML}{CC9EFF}
	\definecolor{royalBlue}{HTML}{057DCD}
	\definecolor{mpink}{HTML}{FC6C85}
	\definecolor{lblue}{HTML}{4A90E2}
	\definecolor{peagreen}{HTML}{98C127}
	\definecolor{typnavy}{HTML}{001F3F}
	\definecolor{typblue}{HTML}{0074D9}
	\definecolor{typaqua}{HTML}{7FDBFF}
	\definecolor{typteal}{HTML}{39CCCC}
	\definecolor{typeastern}{HTML}{239DAD}
	\definecolor{typpurple}{HTML}{B10DC9}
	\definecolor{typfuchsia}{HTML}{F012BE}
	\definecolor{typmaroon}{HTML}{85144B}
	\definecolor{typred}{HTML}{FF4136}
	\definecolor{typorange}{HTML}{FF851B}
	\definecolor{typyellow}{HTML}{FFDC00}
	\definecolor{typolive}{HTML}{3D9970}
	\definecolor{typgreen}{HTML}{2ECC40}
	\definecolor{typlime}{HTML}{01FF70}
	\definecolor{newgreen}{HTML}{83C702}
	\definecolor{mathchaYellow}{HTML}{F8E71C}
	\definecolor{mathchaGreen}{HTML}{7ED321}
	\definecolor{mathchaPurple}{HTML}{BD10E0}
	\definecolor{mathchaBlue}{HTML}{58B1FF}
	\definecolor{mathchaRed}{HTML}{FF5B59}
	\definecolor{metablue}{HTML}{0064E0}
\newcommand{\cD}{\mathcal{D}}

\usepackage{float}
\newfloat{algorithm}{tbp}{loa}
\floatname{algorithm}{Algorithm}
\usetikzlibrary{arrows.meta,calc,shapes.geometric}
\usepackage[round]{natbib}
\newcounter{algline}
\newcommand{\AlgLineNumber}{\stepcounter{algline}%
  \llap{\makebox[1.4em][l]{\normalfont\scriptsize\arabic{algline}}\hskip\leftskip}}
\newcommand{\KwIn}[1]{\par\noindent\textbf{Input:} #1\par
  \setcounter{algline}{0}\everypar{\AlgLineNumber}}
\newcommand{\For}[2]{\par\noindent\textbf{for} #1 \textbf{do}\par
  \begingroup\advance\leftskip by 1.6em\relax #2\par\endgroup}
\newcommand{\Repeat}[2]{\par\noindent\textbf{repeat}\par
  \begingroup\advance\leftskip by 1.6em\relax #2\par\endgroup
  \noindent\textbf{until} #1\par}
\newcommand{\eIf}[3]{\par\noindent\textbf{if} #1 \textbf{then}\par
  \begingroup\advance\leftskip by 1.6em\relax #2\par\endgroup
  \noindent\textbf{else}\par
  \begingroup\advance\leftskip by 1.6em\relax #3\par\endgroup}
\AtBeginEnvironment{algorithm}{\parindent=0pt\relax
  \let\originalsemicolon\;\renewcommand{\;}{\par}}

\DeclareMathOperator{\supp}{supp}
\newcommand{\1}{\mathbf{1}}
\newcommand{\E}{\mathbb{E}}

\newcommand{\Reg}{\mathrm{Reg}}

\title{Support-Dependent Regret Rates for Profit\\ Maximization in Multilateral Trade}
\author{\begin{tabular}{@{}c@{\hspace{2em}}c@{}}
  \begin{tabular}{c}Francesco Bacchiocchi\\Politecnico di Milano\end{tabular} &
  \begin{tabular}{c}Matteo Castiglioni\\Politecnico di Milano\end{tabular} \\[6mm]
  \begin{tabular}{c}Anna Lunghi\\Politecnico di Milano\end{tabular} &
  \begin{tabular}{c}Alberto Marchesi\\Politecnico di Milano\end{tabular}
\end{tabular}}
\date{}

\begin{document}
\maketitle

\begin{abstract}
We study online profit maximization in multilateral trade, a setting that generalizes both dynamic pricing and bilateral trade. At each round, an intermediary posts prices to buyers and offers payments to sellers; trade occurs only if every agent accepts.
We consider valuations drawn from an unknown joint distribution, allowing arbitrary correlation across agents, under the assumption that each agent's marginal valuation distribution has unknown support of size at most $K$. For constant dimension $d$, we design an algorithm with regret $\widetilde O(\sqrt{K^dT})$. This extends optimal support-dependent guarantees for one-dimensional dynamic pricing to the multidimensional setting of multilateral trade. At the core of our approach is a normalized zooming procedure that adapts to the local probability of trade, avoiding the excessive refinement of nearly flat regions that can arise with standard zooming.
\end{abstract}

\section{Introduction}

We study online profit maximization in multilateral trade, where an intermediary facilitates transactions between a set of buyers $B$ and a set of sellers $S$. At each round, the intermediary posts a price to each buyer and offers a payment to each seller. Trade occurs only if every buyer is willing to pay its posted price and every seller is willing to accept its offered payment. When trade occurs, the intermediary earns the difference between the total payments collected from the buyers and the total payments made to the sellers; otherwise, its profit is zero. Agents' valuations are drawn independently across rounds from an unknown joint distribution. The intermediary observes only whether the trade happens and seeks to minimize regret with respect to the best fixed vector of posted prices in hindsight.
This framework includes, as special cases, online pricing \citep{Kleinberg2003} with a single buyer and no sellers, and online bilateral trade \citep{cesa2021regret} with one buyer and one seller, and the joint-ads model \citep{aggarwal2024selling} with two buyers and no sellers.

Recent work has studied profit maximization in repeated trade under different benchmarks and feedback models. \citet{digregorio2025} consider the richer benchmark of all incentive-compatible and individually rational mechanisms, going beyond fixed posted prices. For bilateral trade with fixed prices and binary trade feedback, \citet{castiglioni2026sample} provide optimal sample complexity bounds and \citet{castiglioni2026breakingt34barrierregret} obtain $\widetilde O(T^{7/10})$ regret for arbitrary valuation distributions, leaving a gap with the $\Omega(T^{2/3})$ lower bound already present in pricing settings \citep{Kleinberg2003}. These results naturally raise the question of whether additional structure in the valuation distributions can lead to faster learning.

In this work, we focus on distributions in which each agent's valuation has finite support. Specifically, the valuation of agent $i$ is supported on an unknown set of size at most $K_i$; neither the locations of the support points nor their probabilities are known to the intermediary. Importantly, we impose no independence assumption across agents: the joint distribution may exhibit arbitrary correlations. We characterize the complexity of the problem depending on $K=\max_i K_i$, \emph{i.e.}, the largest support among all agents.

Our work builds on a line of research on dynamic pricing with finitely supported valuations, which corresponds to the special case of our model with a single buyer and no sellers. \cite{pmlr-v98-cesa-bianchi19a} initiated the study of the problem when the buyer's valuation distribution has an unknown support of size $K$. They established a regret bound $\widetilde O(\sqrt{KT})$, but with an additional instance-dependent term that can grow polynomially with the inverse distance between consecutive valuations. Recently, this instance-dependent term was removed, independently and using different techniques, by \cite{bacchiocchi2025} and \cite{lykouris2025contextual}. The former develops a general framework for online learning with piecewise-linear rewards, while the latter obtains the optimal support dependence for non-contextual pricing through variance-aware zooming. Our work extends these one-dimensional guarantees to the multidimensional setting arising in multilateral trade, where the number of relevant candidate decisions can grow as $O(K^d)$.

The key structural observation underlying these results is that finite supports induce only finitely many relevant actions. Between two consecutive valuation thresholds, changing an offer does not affect whether the corresponding agent accepts and, consequently, does not change the probability of trade. Within such a region, moving the offer in the appropriate direction strictly improves the intermediary's margin. Thus, if all support locations were known, it would suffice to optimize over the finite grid induced by the marginal supports. In our $d$-dimensional setting, this grid contains at most  $O(K^d)$ points, generalizing the $O(K)$ size in pricing. Running a standard finite-armed bandit algorithm over these candidates would then yield the desired $\widetilde O(\sqrt{K^dT})$ regret bound. The challenge is that the support locations, and hence the candidate set itself, are unknown. The main difficulty is therefore to discover sufficiently accurate candidate actions from bandit feedback while avoiding a prohibitively fine discretization of the entire action space.

\begin{figure*}[t]
  \centering
  \resizebox{\textwidth}{!}{
\begingroup
\definecolor{nzblue}{HTML}{245B8A}
\definecolor{nzgreen}{HTML}{23765A}
\definecolor{nzorange}{HTML}{B86A19}
\definecolor{nzgray}{HTML}{64717D}
\begin{tikzpicture}[
  x=1cm,y=1cm, font=\footnotesize,
  >={Latex[length=1.5mm,width=1mm]},
  axis/.style={->,thin,nzgray},
  cell/.style={thin,nzgray},
  guide/.style={densely dashed,nzgray!55,very thin},
  heading/.style={font=\bfseries\footnotesize,anchor=base},
  equation/.style={font=\footnotesize,align=center},
  note/.style={font=\scriptsize,align=center,text=nzgray},
  dot/.style={circle,fill=nzblue,inner sep=1.5pt}
]
\path[use as bounding box] (0,-1.8) rectangle (17.4,4.4);

\fill[nzblue!5] (2.6,1.15)--(5.15,.5)--(5.15,3.05)--(2.6,2.2)--cycle;
\draw[nzblue!60,densely dashed,thin] (2.6,2.2)--(5.15,3.05);
\draw[nzblue!60,densely dashed,thin] (2.6,1.15)--(5.15,.5);
\begin{scope}
  \node[heading] at (2.05,4.12) {(a) Support-induced grid};
  \node[equation] at (2.05,3.62) {$\Pr(V\preceq x)$ is constant\\within each cell};
  \fill[nzblue!10] (1.55,1.15) rectangle (2.6,2.2);
  \foreach \xx in {.65,1.55,2.6,3.65}
    \draw[guide] (\xx,.3)--(\xx,3.15);
  \foreach \yy in {.45,1.15,2.2,3.05}
    \draw[guide] (.45,\yy)--(3.75,\yy);
  \draw[nzblue,semithick] (1.55,1.15) rectangle (2.6,2.2);
  \foreach \xx in {.65,1.55,2.6,3.65}
    \foreach \yy in {.45,1.15,2.2,3.05}
      \fill[nzgray] (\xx,\yy) circle (.85pt);
  \draw[axis] (.35,.2)--(3.95,.2) node[above right,text=black,inner sep=1pt] {$x_1$};
  \draw[axis] (.35,.2)--(.35,3.25) node[above,text=black] {$x_2$};
  \foreach \pos/\lab in {1.55/a_1,2.6/b_1} {
    \draw[nzgray] (\pos,.2)--(\pos,.13);
    \node[below,font=\scriptsize,inner sep=2pt] at (\pos,.13) {$\lab$};
  }
  \foreach \pos/\lab in {1.15/a_2,2.2/b_2} {
    \draw[nzgray] (.35,\pos)--(.28,\pos);
    \node[left,font=\scriptsize,inner sep=2pt] at (.28,\pos) {$\lab$};
  }
  \node[dot] at (1.55,1.15) {};
  \node[below left,inner sep=3pt] at (1.55,1.15) {$a$};
  \node[equation] at (2.05,-1.28) {$\mu(a)\ge \mu(x)$ within the cell};
\end{scope}

\foreach \offset/\title/\prob/\radius/\col in {
  4.35/{(b) Large trade prob.}/{\Pr(V\preceq x)=1}/.72/nzgreen,
  8.7/{(c) Small trade prob.}/{0<\Pr(V\preceq x)<1}/1.8/nzorange,
  13.05/{(d) Normalized gap}/{\Pr(V\preceq x)>0}/.72/nzblue}
{
  \begin{scope}[xshift=\offset cm]
    \node[heading] at (2.05,4.12) {\title};
    \node[equation] at (2.05,3.62) {$\prob$};
    \fill[\col!16] (.8,.5)--({.8+\radius},.5)--(.8,{.5+\radius})--cycle;
    \draw[nzblue,semithick] (.8,.5) rectangle (3.65,3.05);
    \draw[\col,thick] ({.8+\radius},.5)--(.8,{.5+\radius});
    \draw[axis] (.35,.2)--(3.95,.2) node[above right,text=black,inner sep=1pt] {$x_1$};
    \draw[axis] (.35,.2)--(.35,3.25) node[above,text=black] {$x_2$};
    \foreach \pos/\lab in {.8/a_1,3.65/b_1} {
      \draw[nzgray] (\pos,.2)--(\pos,.13);
      \node[below,font=\scriptsize,inner sep=2pt] at (\pos,.13) {$\lab$};
    }
    \foreach \pos/\lab in {.5/a_2,3.05/b_2} {
      \draw[nzgray] (.35,\pos)--(.28,\pos);
      \node[left,font=\scriptsize,inner sep=2pt] at (.28,\pos) {$\lab$};
    }
    \node[dot,fill=\col] at (.8,.5) {};
    \node[above right,inner sep=3pt] at (.8,.5) {$a$};
  \end{scope}
}
\begin{scope}[xshift=4.35cm]
  \draw[<->,nzgreen] (.8,-.43)--(1.52,-.43)
    node[midway,below=2pt,text=black] {$\varepsilon$};
  \node[equation] at (2.05,-1.28) {$\mu(a)-\mu(x)\le\varepsilon$};
\end{scope}
\begin{scope}[xshift=8.7cm]
  \draw[<->,nzorange] (.8,-.43)--(2.6,-.43);
  \node[font=\scriptsize,anchor=north] at (1.7,-.49)
    {$\varepsilon/\Pr(V\preceq x)$};
  \node[equation] at (2.05,-1.28) {$\mu(a)-\mu(x)\le\varepsilon$};
\end{scope}
\begin{scope}[xshift=13.05cm]
  \draw[<->,nzblue] (.8,-.43)--(1.52,-.43)
    node[midway,below=2pt,text=black] {$\varepsilon$};
  \node[equation] at (2.05,-1.35)
    {$\displaystyle\frac{\mu(a)-\mu(x)}{\Pr(V\preceq x)}\le\varepsilon$};
\end{scope}
\end{tikzpicture}
\endgroup
}
  \caption {(a) The marginal supports partition the action space into cells within which $\Pr(V\preceq x)$ is constant. Panels (b)--(d) illustrate the $\varepsilon$-optimal region inside a fixed cell for different trade probabilities (assuming point $a$ is optimal within the cell). Panels (b) and (c) correspond to large and small trade probabilities, respectively: the $\ell_1$-radius of the $\varepsilon$-optimal region scales as $\varepsilon/\Pr(V\preceq x)$, up to truncation by the cell boundaries. Consequently, when the trade probability is small, a much larger portion of the cell appears nearly optimal. Panel (d) shows that normalizing by the trade probability restores an $\varepsilon$-optimal region of the same scale as in the large-probability case in Panel (b).}
  \label{fig:normalized-zooming}
\end{figure*}

\subsection{Our Contributions}

We study the following abstract formulation, which captures the relevant structure of multilateral trade while simplifying the notation and slightly generalizing the model. At each round, the learner chooses an action $x\in[0,1]^d$ and receives reward
\[
\bigl(\beta-\langle \alpha,x\rangle\bigr)
\mathbf{1}\{V\preceq x\},
\]
where $V\in[0,1]^d$ is drawn from an unknown distribution, $\alpha\in(0,1]^d$, and $\beta$ is a constant such that rewards lie in $[-1,1]$. The learner observes only $\mathbf{1}\{V\preceq x\}$. Multilateral trade can be reduced to this formulation by reversing the coordinates corresponding to buyer prices: a buyer with value $v_i$ accepting a price $p_i$ is equivalently represented by $1-v_i\le 1-p_i$, whereas seller offers already have the required monotonicity.

Recall that $K_i$ denotes the support size of coordinate $i$ and let $K=\max_i K_i$. For any constant dimension $d$, we design an algorithm with regret
$
O\left(\sqrt{T K^d}\log T\right)
$.
The same guarantee therefore applies to multilateral trade with a constant number of agents.

\paragraph{A Small Set of Candidate Solutions}
Our starting point is a simple geometric property of finitely supported distributions. The support points of each coordinate partition $[0,1]^d$ into axis-aligned hyperrectangles. Within the interior of each such rectangle, the probability $\Pr(V\preceq x)$ is constant. Hence, within a rectangle, the expected reward is the product of a constant trade probability and the strictly decreasing linear function $\beta-\langle\alpha,x\rangle$. The best point in the rectangle is therefore its componentwise smallest corner.
As a consequence, when the supports are known, an optimal action can be found among the grid points induced by the marginal supports. There are at most
$O(K^d)$
such candidates. Running a standard finite-armed bandit algorithm on this set would immediately yield the desired support-dependent regret. The challenge is that the locations of these candidate points are unknown.

\paragraph{Why a Direct Zooming Approach Fails}
The observation above suggests learning the relevant grid adaptively rather than discretizing the entire hypercube in advance. A natural approach is to use a zooming-style algorithm that recursively refines only regions that may contain near-optimal actions.
Away from the support thresholds, the expected reward is
$
\mu(x)= \Pr(V\preceq x)\bigl(\beta-\langle\alpha,x\rangle\bigr),
$
and, since $\Pr(V\preceq x)$ is constant within each cell, its local slope with respect to coordinate $i$ is the constant $-\alpha_i \Pr(V\preceq x)$. Hence, the variation of the objective inside a cell is not determined by the cell diameter alone, but is also scaled by the probability of trade. When this probability is small, the objective can become nearly flat over a large region. A standard zooming rule may then regard the entire region as only mildly suboptimal and continue refining it unnecessarily, preventing a regret bound that depends only on the number of support-induced candidates.

\paragraph{Probability-normalized Zooming}
We overcome the problem by introducing a probability-normalized splitting rule. Rather than deciding whether to refine a cell solely from its geometric width, our algorithm scales the splitting rule using an estimate of the probability of trade in that cell. 
This normalization balances two competing requirements. Near an optimal support threshold, the algorithm must refine sufficiently aggressively to identify an accurate candidate action. Conversely, regions with very small trade probability contribute little variation in expected reward and should not be repeatedly subdivided. We show that this rule allows the adaptive partition to exploit the finite-support structure without knowing the support locations in advance, leading to the regret bound
$
O\left(\sqrt{T K^d}\log T\right)
$.
Figure~\ref{fig:normalized-zooming} depicts the core intuition underlying probability-normalized zooming.

\paragraph{A Tight Lower Bound}
We complement our upper bound with a matching support-dependent lower bound.
For every fixed dimension $d$, we show that any algorithm suffers regret
\[
\Omega \!\left(
\min\left\{
\sqrt{K^d T},
T^{2/3}
\right\}
\right)
\]
on some instance with marginal supports of size at most $K$.
Thus, our upper bound is optimal up to logarithmic factors whenever
$K \le T^{1/3d}$. Characterizing the optimal regret bound in the large-support regime remains an open problem. The best known upper bound is $\widetilde O (T^{3/4})$ in arbitrary dimension~\citep{castiglioni2026sample} and $\widetilde O(T^{7/10})$ in the two-dimensional case \citep{castiglioni2026breakingt34barrierregret}.
At a high level, our lower bound is based on a construction with $M^d$ disjoint cells, whose number grows exponentially with the number of agents. Under a
reference instance, the lower corner of each cell attains the same expected
reward. We then construct a family of alternative instances by locally
perturbing the distribution within a single cell, making its lower corner
optimal while leaving the feedback distribution unchanged outside that
cell. Hence, identifying the optimal action requires the learner to
distinguish among $M^d$ possible cells.
%, as in an $M^d$-armed bandit problem. 
Finally, a change-of-measure argument then yields the stated lower bound.
Finally, the construction is itself a multilateral-trade instance with
$d$ buyers and no sellers, after reversing the buyers' coordinates as in
our reduction.

\subsection{Related Work}
While the focus of this work is on profit maximization, there is a long line of research studying the related problem of gain from trade maximization. \citet{cesa2024bilateral} established that, under one-bit feedback, learning is impossible in general. They also showed that, under two-bit feedback, strongly budget-balanced mechanisms can be learned when buyer and seller values are independently drawn from sufficiently regular distributions.

Later works considered adversarial environments. \citet{azar2022alpha} obtained guarantees in terms of $2$-regret, whereas \citet{cesa2024regret} showed that standard no-regret guarantees can be recovered under a smoothed-adversary assumption. Relaxing the budget-balance requirement provides another route to meaningful guarantees without relying on smoothing. In particular, \citet{bernasconi2024glob} and \citet{chen2025tight} study how globally budget-balanced mechanisms can be exploited to obtain learning guarantees in fully adversarial environments. \citet{lunghiBetter} study instead the trade-off between regret and violation of the budget-balance constraint. These results, however, compare against the best fixed price in hindsight. For stronger randomized benchmarks, the picture is substantially different: no-regret learning is impossible in the adversarial setting \citep{bernasconi2024glob}.

Stronger guarantees become possible in stochastic environments. \citet{lunghi2026stronger} show that no-regret learning with respect to a distributional benchmark can be achieved under suitable smoothness assumptions.

Beyond the classical bilateral-trade model, the literature has considered several extensions, including contextual environments \citep{gaucher2025featurebased,cosson2026contextual,coccianonparametric}, fairness considerations \citep{bachoc2024fair}, markets with multiple buyers \citep{babaioff2024learning,lunghi2025online}, and related models in which the two sides of the market play more
symmetric roles
\citep{bacchiocchiOnline2025,bolic2024brokerage,bachoc2025b};
see also \citep{cesari2025an,bachoc2025a}.

\section{Preliminaries}

At round $t$, a learner selects $x_t\in[0,1]^d$. The valuations $V_t$ in each round $t$ are independently sampled from an unknown joint distribution $\mathcal{D}$ supported on $[0,1]^d$, and the learner observes only
\[
Z_t= \1\{V_t\preceq x_t\},
\]
where we write $x\preceq y$ if $x_i\le y_i$ for every $i$. 
The learner can always compute also the  reward
\[
Y_t=(\beta-\langle \alpha,x_t\rangle) \1\{V_t\preceq x_t\},
\]
where we assume that the scalar $\beta$ and the vector $\alpha\in (0,1]^d$ are known to the learner, and satisfy
\[(\beta-\langle \alpha,x\rangle) \1\{V_t\preceq x\}\in [-1,1] \quad \forall x\in [0,1]^d.\]
To shorten the notation we set $g(x)=(\beta-\langle \alpha,x\rangle)$ and we assume $g(x)\in [-1,1]$ for each $x\in [0,1]^d$.

We define the lower-orthant CDF and mean reward by
\[
 F(x)=\Pr(V\preceq x),\qquad \mu(x)= (\beta- \langle \alpha, x \rangle) F(x).
\]
Letting  $\mu^\star=\sup_x\mu(x)$ and $\Delta(x)=\mu^\star-\mu(x)$, we define the pseudo-regret as
\[
 \Reg_T=\E\left[\sum_{t=1}^T\Delta(x_t)\right].
\]

Let $S_i$ be the projection of $\supp(\cD)$ onto coordinate $i$, and let
\(
 K_i=|S_i|,\, K=\max_i K_i<\infty,\,
 G=\prod_{i=1}^d S_i.
\)
Then $|G|\le K^d$. Notice that the distribution need not have independent coordinates.

For every $x\in[0,1]^d$ such that $x_i\ge \min_{z\in S_i}z \, \forall i\in[d]$, define its coordinatewise rounding $y(x)\in G$ by
\[
 y_i(x)=\max\{z\in S_i:z\le x_i\}.
\]
Notice that $F(x)=F(y(x))$.
Thus $\mu(y(x))\ge\mu(x)$, and a maximizer of $\mu$ exists in the finite grid $G\cup \{ \textbf{0} \}$.\footnote{Adding the origin $\mathbf{0}$ to the grid addresses the case in which $g(x)$ is negative at every point in $G$.}

For the rest of this paper we will assume that there exists an $x\in [0,1]^d$ such that $g(x)\ge0$, otherwise the problem is trivial as an optimal action always lies at $\mathbf0$.

\paragraph{A Special Case: Multilateral Trade}

Consider an intermediary facilitating a transaction between a set
of buyers $B$ and a set of sellers $S$. Write $m=|B|$, $n=|S|$,
and $d=m+n$. At each round $t$, the intermediary posts a price
$p_{t,i}\in[0,1]$ to each buyer $i$ and offers a payment
$q_{t,j}\in[0,1]$ to each seller $j$. The valuation vector
$(v_t^b,v_t^s)$ is drawn independently across rounds from an
unknown joint distribution, allowing arbitrary dependence
across agents.
Multilateral trade admits a CDF representation by reversing
the buyers' coordinates. Set $x_i=1-p_i$ and $V_i=1-v_i^b$
for each buyer, and $x_{|B|+j}=q_j$ and
$V_{|B|+j}=v_j^s$ for each seller. Trade then occurs exactly
when $V\preceq x$, and the expected profit, normalized in the interval $[-1,1]$, is
\[
\mu(x)=g_{\mathrm{tr}}(x)F(x),
\qquad
g_{\mathrm{tr}}(x)=\frac{m-\sum_{i=1}^{d}x_i}{d}.
\]

\section{Why Does Standard Zooming Fail?}
\begin{figure*}[t]
  \centering
  \resizebox{\textwidth}{!}{%
    % Requires: \usepackage{tikz}
% Illustration uses r=1/10 and epsilon=1/4; labels state general counts.
\begin{tikzpicture}[
x=3.65cm,y=2.4cm,
=stealth,
every node/.style={font=\small},
axis/.style={->,black!65,thin},
curve/.style={very thick},
threshold/.style={densely dashed,black!55,thin},
cell/.style={thin,black!55},
paneltitle/.style={font=\small\bfseries,align=center},
count/.style={font=\small,align=center}
]
\definecolor{zoomblue}{RGB}{40,100,165}
\definecolor{zoomorange}{RGB}{200,115,35}
\definecolor{zoomgreen}{RGB}{30,135,110}
% Panel 1: purchase probability one (representative constant probability).
\begin{scope}
\node[paneltitle] at (.5,1.38) {(a) Frequent purchases};
\node at (.5,1.10) {$\Pr(V=1)=1$};
\fill[zoomblue!12] (.9,0) rectangle (1,.98);
\draw[axis] (0,0)--(1.09,0) node[right] {$p$};
\draw[axis] (0,0)--(0,1.04) node[ left] {$\Delta(p)$};
\draw[threshold] (0,.1)--(1,.1);
\node[left] at (0,.1) {$\delta$};
\draw[curve,zoomblue] (0,1)--(1,0);
\node[zoomblue,fill=white,inner sep=1pt] at (.40,.68) {$1-p$};
\node[below left] at (0,0) {$0$};
\node[below right] at (1,0) {$1$};

\end{scope}
% Panel 2: the same absolute gap threshold covers a wider price interval.
\begin{scope}[shift={(1.42,0)}]
\node[paneltitle] at (.5,1.38) {(b) Rare purchases:\\
before normalization};
\node at (.5,1.10) {$\Pr(V=1)=\varepsilon$};
\fill[zoomorange!14] (.6,0) rectangle (1,.98);
\draw[axis] (0,0)--(1.09,0) node[right] {$p$};
\draw[axis] (0,0)--(0,1.04) node[ left] {$\Delta(p)$};
\draw[threshold] (0,.1)--(1,.1);
\node[left] at (0,.1) {$\delta$};
\draw[curve,zoomorange] (0,.25)--(1,0);
\node[zoomorange,fill=white,inner sep=1pt] at (.43,.40) {$\varepsilon(1-p)$};
\node[below left] at (0,0) {$0$};
\node[below right] at (1,0) {$1$};
\end{scope}
% Panel 3: normalization represents the probability-weighted gap threshold.
\begin{scope}[shift={(2.84,0)}]
\node[paneltitle] at (.5,1.38) {(c) Rare purchases: \\
after normalization};
\node at (.5,1.10) {$\Pr(V=1)=\varepsilon$};
\fill[zoomgreen!12] (.9,0) rectangle (1,.98);
\draw[axis] (0,0)--(1.09,0) node[right] {$p$};
\draw[axis] (0,0)--(0,1.04) node[ left] {$\Delta(p)/\varepsilon$};
\draw[threshold] (0,.1)--(1,.1);
\node[left] at (0,.1) {$\delta$};
\draw[curve,zoomgreen] (0,1)--(1,0);
\node[zoomgreen,fill=white,inner sep=1pt] at (.40,.68) {$1-p$};
\node[below left] at (0,0) {$0$};
\node[below right] at (1,0) {$1$};
\end{scope}
\end{tikzpicture}%
  }
  \caption{Panels (a) and (b) indicate with shaded intervals the sets of $\delta$-approximately optimal prices for different values of $\Pr(V=1)$, assuming that $\Pr(V=0)=1-\Pr(V=1)$. Panel (c) illustrates how normalization makes the slope independent of the purchase probability.}
  \label{fig:weighted-zooming}
\end{figure*}
To better explain why the regret bound for the classical zooming
algorithm~\citep{kleinberg2008multiarmed} can be larger than the
one obtained by our algorithm, we start with an introductory example.
Consider a setting with two buyer types. The first has valuation
zero, and the second has valuation one.

If the buyer with valuation one appears with probability $\Theta(1)$,
a purchase occurs with constant probability at every positive price.
For every $\delta>0$, the $\delta$-optimal prices can then
be covered by a constant number of intervals of diameter $\delta$.
Therefore, the algorithm by~\cite{kleinberg2008multiarmed} keeps refining intervals concentrated near the
optimal price, yielding a regret bound of $\widetilde O(\sqrt T)$.

When the buyer with valuation one appears with probability
$\varepsilon$, the expected revenue
at price $p$ is $\varepsilon p$, and the suboptimality gap is
$\varepsilon(1-p)$. The number of intervals of diameter $\delta$ needed
to cover the $\delta$-optimal prices is \(\Theta\!\left(\min\left\{1/\delta,1/\varepsilon\right\}\right).\) 
Therefore, the total number of intervals that can be
refined throughout the execution is bounded by
$\widetilde O(1 /\varepsilon)$. Following the analysis
of~\citet{kleinberg2008multiarmed}, this yields a regret bound of
$\widetilde O\left(
\min\left\{\sqrt{{T}/{\varepsilon}},\,\varepsilon T\right\}
\right)$.
One of the main contributions of our paper is to address this problem through a normalization that increases the slope of the reward function.
Figure \ref{fig:weighted-zooming} illustrates an example of why zooming fails and how normalization fixes this issue.

\section{Suboptimality and the support grid}

To obtain zooming-style guarantees that adapt to the support of the distribution, and in particular to the cardinality of \(G\), we first need to characterize the suboptimality gap of points that do not lie in a sufficiently small neighborhood of \(G\). The general idea is that, by exploiting the structure of the $\mu(\cdot)$ function, one can show that all points at most $\varepsilon$ suboptimal lie in a neighborhood of a point of the marginal support $G$.

To do so, two considerations are in order. 
First, the problem is inherently asymmetric: the objective
\[
\mu(x)=g(x)F(x)
\]
is, in general, only right-continuous, which means that any notion of neighborhood should reflect the asymmetry. Second, while $g(\cdot)$ is  globally Lipschitz, it behaves very differently along different directions, so the notion of neighborhood should vary across directions.

To account for both issues, we introduce an asymmetric notion of neighborhood defined through a norm adapted to the function $g$. For \(y\in G\) and \(\varepsilon>0\), define the upper ball of radius \(\varepsilon\) centered at \(y\) by
\[
B^{\uparrow,\alpha}_\varepsilon(y)
\coloneqq
\left\{x\succeq y:\|x-y\|_{1,\alpha}\le \varepsilon\right\},
\]
where the norm $\|\cdot\|_{1,\alpha}$ is defined as the weighted norm
\[\|z\|_{1,\alpha}=\sum_{i=1}^d\alpha_i|z_i|.\]
Thus, only points lying above \(y\) in the coordinatewise order are regarded as neighboring \(y\), and therefore  $x\in B^{\uparrow,\alpha}_\varepsilon(y)$ means that $g(y)-g(x)\le \varepsilon$.

Finally, we have to account for the fact that $\mu(\cdot)$ is a product of $g(\cdot)$ and the CDF $F(\cdot)$, that makes the local slope of $\mu(\cdot)$ proportional to the CDF in the region. Therefore, we can only derive a lower bound on the suboptimality gap of a point \(x\) that scales linearly with \(F(x)\). This weighted bound is sufficient for our purposes and captures the local regularity of the objective away from the asymmetric neighborhoods of the support grid.
%
%Set $B^\uparrow_\varepsilon(y)=\{x\succeq y:\|x-y\|_1\le\varepsilon\}$.
%
Formally:
\begin{restatable}{lemma}{LemmaPalla}\label{lem:gap}
For any $\epsilon>0$ and any $x\in [0,1]^d$, if $x\notin\bigcup_{y\in G}B^{\uparrow,\alpha}_\varepsilon(y)$, then
$\Delta(x)\ge \varepsilon F(x)$.
Additionally, if $F(x)>0$ it also holds 
$\Delta(x)> \varepsilon F(x).$
\end{restatable}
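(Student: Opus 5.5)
The plan is to compare $x$ with its coordinatewise rounding $y(x)\in G$ and use the linear structure of $g$ together with $F(x)=F(y(x))$.

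\textbf{Degenerate case.} First handle the case where $y(x)$ is undefined, i.e.\ $x_i<\min S_i$ for some $i$. Then no $V$ in the support satisfies $V\preceq x$, so $F(x)=0$ and $\mu(x)=0$. Since we assume some point has $g\ge 0$, and rounding such a point does not decrease $g$, we get $\mu^\star\ge 0$. Hence $\Delta(x)\ge 0=\varepsilon F(x)$, and the strict claim is vacuous because $F(x)=0$.

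\textbf{Main case.} Otherwise set $y=y(x)\in G$. By construction $y\preceq x$, so the hypothesis $x\notin B^{\uparrow,\alpha}_\varepsilon(y)$ forces
\[
\|x-y\|_{1,\alpha}>\varepsilon .
\]
Because $x\succeq y$, linearity of $g$ gives
\[
g(y)-g(x)=\langle\alpha,x-y\rangle=\|x-y\|_{1,\alpha}>\varepsilon .
\]
Using $F(y)=F(x)$ (noted in the preliminaries), we obtain
\[
\Delta(x)\ \ge\ \mu(y)-\mu(x)\ =\ F(x)\bigl(g(y)-g(x)\bigr)\ \ge\ \varepsilon F(x),
\]
since $\mu^\star\ge\mu(y)$. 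When $F(x)>0$, the strict inequality $g(y)-g(x)>\varepsilon$ multiplies through to give $\Delta(x)>\varepsilon F(x)$. When $F(x)=0$, the non-strict bound reduces to $\Delta(x)\ge 0$, which again holds because $\mu^\star\ge 0$.

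\textbf{Main obstacle.} The only delicate point is justifying $F(y(x))=F(x)$ when the distribution has atoms, together with right-continuity. I would argue this directly: $V\preceq x$ with $V$ in the support means $V_i\in S_i$ and $V_i\le x_i$, which is equivalent to $V_i\le y_i(x)$. This uses that $S_i$ is a finite set, so the maximum defining $y_i(x)$ is attained.
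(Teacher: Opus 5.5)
Your proof is correct and follows the same route as the paper. You round $x$ down to $y(x)\in G$ and use $F(x)=F(y(x))$ together with linearity of $g$ to get $\Delta(x)\ge\mu(y(x))-\mu(x)=\|x-y(x)\|_{1,\alpha}F(x)>\varepsilon F(x)$, treating $F(x)=0$ (including undefined $y(x)$) separately. In that degenerate case, $\Delta(x)\ge 0$ already follows from $\mu^\star=\sup_z\mu(z)\ge\mu(x)$, so you do not need the assumption that some point has $g\ge 0$.
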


\section{Algorithm Description}

At a high level, Algorithm~\ref{alg:zoom} follows the main principle of zooming algorithms: it starts from a coarse discretization of the action space and progressively refines the discretization only in regions that remain potentially optimal. The algorithm implements this idea through an optimistic selection rule, together with a refinement criterion that balances the error due to statistical uncertainty against the error induced by the current spatial approximation. 

More specifically, Algorithm~\ref{alg:zoom} maintains an adaptive partition $\mathcal P$ of $[0,1]^d$ into axis-aligned rectangles. For a cell $A$, let $\ell_A,u_A$ be the lower and upper corners of its closure, and define
\begin{align}
 \omega_A&=\sum_{i=1}^d \alpha_i(u_{A,i}-\ell_{A,i})
 =g(\ell_A)-g(u_A).\label{eq:gwidth}
\end{align}

The quantity $\omega_A$ is the exact variation of $g$ on the cell. Given an upper confidence bound $F_A^+$ for $F(u_A)$, the algorithm selects a cell maximizing $U_A=g(\ell_A)F_A^+$. At a high level, the algorithm refines the selected cell if its statistical uncertainty is smaller than its probability-weighted geometric approximation error:
\begin{equation}\label{eq:split-rule}
 g(\ell_A)\beta_A<\omega_AF_A^+,
\end{equation}
provided the maximum depth has not been reached. To avoid over-splitting, when the optimum is small, and therefore cells with very low $g(\ell_A)$ are selected, we add an additional condition
\(4\beta_A<F_A^+,\)
which is an active condition only when it holds $g(\ell_A)<4(g(\ell_A)-g(u_A))$.
Otherwise it plays $u_A$ and updates that cell's estimate.

\paragraph{Bisection of a Largest Weighted Side}
To split $A$, choose
\[
 i_A=\arg\max_{i\in[d]} \left(\alpha_i(u_{A,i}-\ell_{A,i})\right)
\]
and bisect only coordinate $i_A$ at its geometric midpoint. This creates two children and halves the largest weighted side. Any ties are broken deterministically, ensuring the entire binary tree remains deterministic even though only an adaptive subset of its nodes is generated.
At depth $h$, all cells of the full tree have the same side-length vector. Let $k_i(h)$ be the number of bisections in coordinate $i$ along a depth-$h$ branch. These counts are independent of the branch, since the choice of split coordinate depends only on the side lengths (see Figure \ref{fig:adaptive-cell-depth}). 
After at most $d$ further bisections every weighted side is at most half the previous maximum: each side larger than that threshold is halved before any smaller side is selected. Since $\omega_0\le 2$, where $\omega_0$ is $g$-diameter of the initial cell $[0,1]^d$, we have
\begin{equation}\label{eq:depth-error}
 \omega_h\le \,2^{-\lfloor h/d\rfloor+1}.
\end{equation}
This rule avoids excessive refinement along coordinates with small coefficients, making our method suitable even for very small choices of $\alpha_i$.

% Preamble:
% \usepackage{tikz}
% \usetikzlibrary{arrows.meta,calc}

\begin{figure*}[t]
\centering
\resizebox{\textwidth}{!}{%
\begin{tikzpicture}[
    >=Latex,
    font=\small,
    boundary/.style={black, line width=0.65pt},
    cut/.style={black, line width=0.55pt},
    axis/.style={->, line width=0.45pt},
    transition/.style={
        ->,
        >=Latex,
        gray!65,
        line width=0.7pt
    },
    paneltitle/.style={font=\small\bfseries},
    depthlabel/.style={
        font=\scriptsize,
        text=black!62
    },
]

% ------------------------------------------------
% Muted palette
% ------------------------------------------------
\definecolor{rootgray}{RGB}{238,238,238}
\definecolor{depthone}{RGB}{210,225,239}
\definecolor{depthtwo}{RGB}{218,233,216}
\definecolor{depththree}{RGB}{243,226,194}

% Geometry
\def\S{2.7}
\def\H{1.35}
\def\Q{0.675}
\def\sep{4.35}

% =================================================
% (a) DEPTH 0
% =================================================
\begin{scope}[shift={(0,0)}]

    \node[paneltitle] at (1.35,3.35)
        {(a) Depth $0$};

    \fill[rootgray]
        (0,0) rectangle (\S,\S);

    \draw[boundary]
        (0,0) rectangle (\S,\S);

    % axes
    \draw[axis]
        (-0.08,0) -- (3.02,0)
        node[right] {$x_1$};

    \draw[axis]
        (0,-0.08) -- (0,3.02)
        node[above] {$x_2$};

    \node[below left, font=\scriptsize]
        at (0,0) {$0$};

    \node[below, font=\scriptsize]
        at (\S,0) {$1$};

    \node[left, font=\scriptsize]
        at (0,\S) {$1$};

    % depth label
    \node[depthlabel]
        at (1.35,1.35)
        {d$0$};

\end{scope}

% -------------------------------------------------
% transition
% -------------------------------------------------
\draw[transition]
    (3.25,1.35) -- (4.05,1.35)
    node[midway, above, font=\scriptsize, text=black!60]
    {};

% =================================================
% (b) DEPTH 1
% =================================================
\begin{scope}[shift={(\sep,0)}]

    \node[paneltitle] at (1.35,3.35)
        {(b) Depth $1$};

    % two children of the root
    \fill[depthone]
        (0,0) rectangle (\H,\S);

    \fill[depthone]
        (\H,0) rectangle (\S,\S);

    \draw[boundary]
        (0,0) rectangle (\S,\S);

    \draw[cut]
        (\H,0) -- (\H,\S);

    % axes
    \draw[axis]
        (-0.08,0) -- (3.02,0)
        node[right] {$x_1$};

    \draw[axis]
        (0,-0.08) -- (0,3.02)
        node[above] {$x_2$};

    \node[below left, font=\scriptsize]
        at (0,0) {$0$};

    \node[below, font=\scriptsize]
        at (\S,0) {$1$};

    \node[left, font=\scriptsize]
        at (0,\S) {$1$};

    % depth labels
    \node[depthlabel]
        at (0.675,1.35)
        {d$1$};

    \node[depthlabel]
        at (2.025,1.35)
        {d$1$};

\end{scope}

% -------------------------------------------------
% transition
% -------------------------------------------------
\draw[transition]
    (7.60,1.35) -- (8.40,1.35)
    node[midway, above, font=\scriptsize, text=black!60]
    {};

% =================================================
% (c) CELLS UP TO DEPTH 2
% =================================================
\begin{scope}[shift={({2*\sep},0)}]

    \node[paneltitle] at (1.35,3.35)
        {\hspace{0.6cm} (c) Up to depth $2$};

    % unsplit depth-1 cell
    \fill[depthone]
        (\H,0) rectangle (\S,\S);

    % children created at depth 2
    \fill[depthtwo]
        (0,0) rectangle (\H,\H);

    \fill[depthtwo]
        (0,\H) rectangle (\H,\S);

    % boundary
    \draw[boundary]
        (0,0) rectangle (\S,\S);

    % depth-1 cut
    \draw[cut]
        (\H,0) -- (\H,\S);

    % only the selected depth-1 cell is refined
    \draw[cut]
        (0,\H) -- (\H,\H);

    % axes
    \draw[axis]
        (-0.08,0) -- (3.02,0)
        node[right] {$x_1$};

    \draw[axis]
        (0,-0.08) -- (0,3.02)
        node[above] {$x_2$};

    \node[below left, font=\scriptsize]
        at (0,0) {$0$};

    \node[below, font=\scriptsize]
        at (\S,0) {$1$};

    \node[left, font=\scriptsize]
        at (0,\S) {$1$};

    % depth labels
    \node[depthlabel]
        at (0.675,0.675)
        {d$2$};

    \node[depthlabel]
        at (0.675,2.025)
        {d$2$};

    \node[depthlabel]
        at (2.025,1.35)
        {d$1$};

\end{scope}

% -------------------------------------------------
% transition
% -------------------------------------------------
\draw[transition]
    (11.95,1.35) -- (12.75,1.35)
    node[midway, above, font=\scriptsize, text=black!60]
    {};

% =================================================
% (d) CELLS UP TO DEPTH 3
% =================================================
\begin{scope}[shift={({3*\sep},0)}]

    \node[paneltitle] at (1.35,3.35)
        {\;\hspace{0.6cm}(d) Up to depth $3$};

    % ------------------------------------------------
    % cells inherited from previous depths
    % ------------------------------------------------

    % depth-1 cell: never refined
    \fill[depthone]
        (\H,0) rectangle (\S,\S);

    % lower-left depth-2 cell: not refined further
    \fill[depthtwo]
        (0,0) rectangle (\H,\H);

    % ------------------------------------------------
    % selected depth-2 cell is split into depth-3 cells
    % ------------------------------------------------
    \fill[depththree]
        (0,\H) rectangle (\Q,\S);

    \fill[depththree]
        (\Q,\H) rectangle (\H,\S);

    % boundary
    \draw[boundary]
        (0,0) rectangle (\S,\S);

    % inherited depth-1 split
    \draw[cut]
        (\H,0) -- (\H,\S);

    % inherited depth-2 split
    \draw[cut]
        (0,\H) -- (\H,\H);

    % newly created depth-3 split
    \draw[cut]
        (\Q,\H) -- (\Q,\S);

    % axes
    \draw[axis]
        (-0.08,0) -- (3.02,0)
        node[right] {$x_1$};

    \draw[axis]
        (0,-0.08) -- (0,3.02)
        node[above] {$x_2$};

    \node[below left, font=\scriptsize]
        at (0,0) {$0$};

    \node[below, font=\scriptsize]
        at (\S,0) {$1$};

    \node[left, font=\scriptsize]
        at (0,\S) {$1$};

    % depth labels
    \node[depthlabel]
        at (0.675,0.675)
        {d$2$};

    \node[
    depthlabel,
    text width=0.55cm,
    align=center
]
at (0.3375,2.025)
{d$3$};

    \node[
    depthlabel,
    text width=0.5cm,
    align=center
]
        at (1.0125,2.025)
        {d$3$};

    \node[depthlabel]
        at (2.025,1.35)
        {d$1$};

\end{scope}

\end{tikzpicture}%
}

\caption{
Illustration of the adaptive cell tree in two dimensions.
The root cell has depth $0$, and splitting a cell at depth $h$
creates two children at depth $h+1$. We use the notation d$i$ for cells of depth $i$. 
Only selected cells are refined, so the active partition may contain
cells at different depths.
Cells belonging to the same depth nevertheless have the same
side-length vector.
}
\label{fig:adaptive-cell-depth}
\end{figure*}

\paragraph{Depth and Confidence Parameters}
For a fixed horizon $T\ge1$ and confidence level $0<\delta<1$, set
\begin{equation}\label{eq: H c def}
 \begin{split}
 H=d\lceil\log_2(&dT)\rceil,\qquad N_H=2^{H+1}-1,\\
c&=2\log\frac{2N_HT}{\delta}.
 \end{split}
\end{equation}
The full binary tree has $N_H$ nodes up to depth $H$, and~\eqref{eq:depth-error} gives $\omega_H\le T^{-1}$. We use $c$ for the confidence parameter.
For a cell $A$, let $n_A$ count the plays assigned to it and let $\widehat F_A$ be the mean of its observations $Z_t$.
Initially, $n_A=0$ and $\widehat F_A=0$. We define the confidence radius
\begin{equation}\label{eq: def beta a}
 \beta_A=\begin{cases}1,&n_A=0,\\
 \sqrt{c/n_A},&n_A\ge1,
 \end{cases}
\end{equation}
and the upper confidence bound on the trade probability at $u_A$,
\begin{equation}\label{eq: def F +}
F_A^+:=\min\{1,\widehat F_A+\beta_A\},
\end{equation}
where $\beta_A$ controls the uncertainty due to the finite number of samples collected at the cell's upper corner.

\begin{algorithm}[t]
\caption{Probability-normalized zooming with upper-corner sampling}\label{alg:zoom}
\KwIn{$T,\delta$; set $H,c$,  as per \eqref{eq: H c def}%,\eqref{eq: def beta a},\eqref{eq: def F +}
}
Initialize $\mathcal P=\{[0,1]^d\}$ and its statistics\;
\For{$t=1,\ldots,T$}{
 \Repeat{an action has been played}{
  Compute $F_A^+$ and $U_A=g(\ell_A)F_A^+$ for every cell $A\in\mathcal P$\;
  Select $A\in\arg\max_{B\in\mathcal P}U_B$\;
  %Set $r_A=\|u_A-\ell_A\|_1$\;
  \eIf{$\operatorname{depth}(A)<H$ and %$g(\ell_A)\beta_A<(g(\ell_A)-g(u_A))F_A^+$}{
  $F_A^+> \max\left\{4, \frac{g(\ell_A)}{g(\ell_A)-g(u_A)}\right\}\beta_A$}{
  Set $i_A \in \arg\max_{i \in [d]}\alpha_i(u_{A,i}-\ell_{A,i})$\;
   Bisect coordinate $i_A$ and replace $A$ by its two children, initializing their statistics\;
  }{
   Play $x_t=u_A$ and observe $Z_t$\;
    Set $Y_t=g(u_A)\cdot Z_t$\;
    Set $\widehat F_A\gets(n_A\widehat F_A+Z_t)/(n_A+1)$ and then $n_A\gets n_A+1$\;
  }
 }
}
\end{algorithm}

\section{Algorithm Guarantees}

The proof works as follows. First, we show that the estimates of the probability are close to the true values. Then, we upper bound the number of cell splits. Finally, we combine the regret suffered in each cell to derive the final bound.

\paragraph{Clean Event}
Let $\mathcal E$ be the event that, for every created cell at every evaluation,
\begin{equation}\label{eq:confidence}
 |\widehat F_A-F(u_A)|\le\beta_A.
\end{equation}
We show that such an event holds with high probability. 
To do so, we apply Hoeffding' s inequality and a union bound over all possible cells and sample counts. This bounds with high probability the difference between the empirical CDFs and the its true value. 

\begin{restatable}{lemma}{LemmaClean}
    Event $\mathcal E$ holds with probability at least $1-\delta$.
\end{restatable}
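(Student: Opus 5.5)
The plan is to establish the event $\mathcal E$ by a standard Hoeffding-plus-union-bound argument. The index set is fixed in advance, so the adaptivity of the algorithm does not interfere.

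The key structural fact is that all cells the algorithm can ever create are nodes of a single deterministic full binary tree of depth at most $H$. Splits always bisect a fixed coordinate determined only by the side lengths, with ties broken deterministically. Hence the collection of potential cells is a fixed set $\mathcal T$ with $|\mathcal T|\le N_H=2^{H+1}-1$, independent of the randomness, and each potential cell $A\in\mathcal T$ has a deterministic upper corner $u_A$.

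\textbf{Step 1 (decoupling adaptivity).} For each $A\in\mathcal T$ and each $s\le T$, I would consider the $s$-th observation $Z$ obtained when the algorithm plays $u_A$ as the representative of cell $A$. The standard device is to pre-sample, for every $A\in\mathcal T$, an i.i.d.\ sequence $Z_{A,1},Z_{A,2},\dots$ of Bernoulli$(F(u_A))$ variables. These are realized as $\1\{V^{(A,s)}\preceq u_A\}$ from an i.i.d.\ table of valuations, and the $s$-th play of cell $A$ reads $Z_{A,s}$. Since $V_t$ is fresh and independent of the past, and the played point $u_A$ is determined by the history, this coupling yields the same law for the whole interaction. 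Equivalently, one can argue via a martingale/optional-skipping argument: the subsequence of observations assigned to $A$ is i.i.d.\ Bernoulli$(F(u_A))$. In either formulation, $\widehat F_A$ after $n_A=s$ plays equals $\frac1s\sum_{j\le s}Z_{A,j}$.

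\textbf{Step 2 (Hoeffding and union bound).} For fixed $A$ and $s\ge1$, Hoeffding gives
\[
\Pr\Bigl(\bigl|\tfrac1s\textstyle\sum_{j\le s}Z_{A,j}-F(u_A)\bigr|>\sqrt{c/s}\Bigr)\le 2e^{-2c}\le 2e^{-c}=\frac{\delta}{N_H T}.
\]
This uses $c=2\log(2N_HT/\delta)$, and the bound holds with room to spare. A union bound over $A\in\mathcal T$ and $s\in\{1,\dots,T\}$ gives total failure probability at most $2N_HT\,e^{-2c}\le\delta$. The case $n_A=0$ is trivial, since $\beta_A=1$ while $\widehat F_A=0$ and $F(u_A)\in[0,1]$. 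Every evaluation of a created cell corresponds to some pair $(A,n_A)$ with $n_A\le T$, so $\mathcal E$ holds on the complement of the union of these bad events.

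The only delicate point is Step 1: justifying that Hoeffding applies despite the adaptive, data-dependent number of samples and cell creation. I would handle it through the pre-sampled table, which makes each $\frac1s\sum_{j\le s}Z_{A,j}$ a sum of genuinely i.i.d.\ variables indexed by the deterministic tree, so that no stopping-time issues remain.
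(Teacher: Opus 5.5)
Your proposal is correct and follows the same route as the paper. You apply Hoeffding's inequality to each fixed pair (cell of the deterministic depth-$H$ tree, sample count $n\le T$), take a union bound over the at most $N_H T$ such pairs, and use the pre-sampled-stream device to spell out the adaptivity justification that the paper leaves implicit. One cosmetic slip: $2e^{-c}=\delta^2/(2N_H^2T^2)$, not $\delta/(N_HT)$; it is still bounded by $\delta/(N_HT)$, so the conclusion is unaffected.
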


Notice that under event $\mathcal{E}$, for every active cell $A$ and any round,
\begin{subequations}\label{eq:optimism}
\begin{align}
 &F(u_A)\le F_A^+\le F(u_A)+2\beta_A\\
 & \mu(x)\le U_A  \qquad\forall x\in A, g(\ell_A)\ge 0.
\end{align}
\end{subequations}

The latter follows from $0\le g(x)\le g(\ell_A)$ and $F(x)\le F(u_A)$. Additionally, there  always exists a cell $A'$ such that $g(\ell_{A'})\ge 0$, which implies that $U_{A'}\ge 0$ and that $g(\ell_A)\ge 0$ for each selected cell.
Therefore every selected cell satisfies $U_A\ge\mu^\star$.

\paragraph{Upperbounding the Number of Cells}

We begin by deriving a sufficient condition for a cell not to be split. In particular, we prove that under the event $\mathcal{E}$ a cell is split only if it is in the neighbor of an element in $G$, using the notion of $B^{\uparrow,\alpha}$ distance balls. The radius of such distance ball is defined by the $g$-radius of the cell itself ($\omega_A$), the smaller the cell, the closer $u_A$ must be to an element in $G$ to be split.

\begin{restatable}{lemma}{lemmaExclude}\label{lem:exclude}
Assume event $\mathcal E$ holds. Let $A$ be a cell such that $g(\ell_A)\ge 0$ and 
\[u_A\notin  B^{\uparrow,\alpha}_{6\omega_A}(y) \quad \forall y \in G\]
where $\omega_A=\|u_A-\ell_A\|_{1,\alpha}$. Then $A$ is never split by Algorithm \ref{alg:zoom}.
\end{restatable}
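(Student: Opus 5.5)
The plan is to argue by contradiction. Suppose that at some evaluation the algorithm selects $A$ and the splitting test succeeds, that is,
\[
F_A^+>\max\Bigl\{4,\tfrac{g(\ell_A)}{\omega_A}\Bigr\}\beta_A .
\]
We combine two facts. Optimism says the selected cell satisfies $U_A\ge\mu^\star$. Lemma~\ref{lem:gap}, applied at the point $u_A$ with radius $\varepsilon=6\omega_A$, says $u_A$ is strictly suboptimal by a margin of order $\omega_A F(u_A)$. The split condition forces the optimism bonus $g(\ell_A)(F_A^+-F(u_A))$ to be of order only $\omega_A F_A^+$, which is too small to cover that margin. Note that $\omega_A>0$ for every cell, since all sides are positive.

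\textbf{Step 1: $F(u_A)$ is comparable to $F_A^+$.} From the condition $F_A^+>4\beta_A$ and \eqref{eq:optimism}, I get
\[
F(u_A)\ge F_A^+-2\beta_A> F_A^+/2 .
\]
In particular $F(u_A)>0$, and $F_A^+<2F(u_A)$.

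\textbf{Step 2: lower bound on $\mu^\star$.} Since $F(u_A)>0$ and $u_A\notin\bigcup_{y\in G}B^{\uparrow,\alpha}_{6\omega_A}(y)$, the strict part of Lemma~\ref{lem:gap} gives $\Delta(u_A)>6\omega_A F(u_A)$. Using $g(u_A)=g(\ell_A)-\omega_A$ from \eqref{eq:gwidth}, this becomes
\[
\mu^\star>g(u_A)F(u_A)+6\omega_AF(u_A)=\bigl(g(\ell_A)+5\omega_A\bigr)F(u_A).
\]

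\textbf{Step 3: upper bound on $\mu^\star$.} Because $A$ is selected, the discussion after \eqref{eq:optimism} gives $\mu^\star\le U_A=g(\ell_A)F_A^+$. Since $g(\ell_A)\ge0$, the bound $F_A^+\le F(u_A)+2\beta_A$ gives
\[
\mu^\star\le g(\ell_A)F(u_A)+2g(\ell_A)\beta_A .
\]
The other half of the split condition, $g(\ell_A)\beta_A<\omega_AF_A^+$, then yields
\[
\mu^\star< g(\ell_A)F(u_A)+2\omega_AF_A^+ .
\]

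\textbf{Step 4: contradiction.} Comparing Steps 2 and 3 and dividing by $\omega_A>0$ gives $5F(u_A)<2F_A^+$. By Step 1, $2F_A^+<4F(u_A)$, so $5F(u_A)<4F(u_A)$, which is impossible since $F(u_A)>0$. Hence the split branch is never taken for $A$. This holds at every evaluation, because the argument used only $\mathcal E$ and the fact that $A$ was selected, and only selected cells can be split.

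The main subtlety is Step 1. Without the auxiliary condition $4\beta_A<F_A^+$, the confidence width could dominate, and $F(u_A)$ could be zero or much smaller than $F_A^+$. The strict inequality in Lemma~\ref{lem:gap} would then be unavailable, and the final comparison would not close. This is exactly why the constant $6$ and the factor $4$ are matched to each other.
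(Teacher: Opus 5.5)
Your proof is correct and follows the paper's argument. Like the paper, you derive $F(u_A)>F_A^+/2$ from the split test, use optimism to get $\Delta(u_A)\le 2\beta_A g(\ell_A)+\omega_A F(u_A)<2\omega_A F_A^+ +\omega_A F(u_A)$, and contradict the strict bound $\Delta(u_A)>6\omega_A F(u_A)$ from Lemma~\ref{lem:gap}. The only differences are cosmetic: you obtain $F(u_A)>F_A^+/2$ directly from $F_A^+>4\beta_A$ instead of through the paper's two cases, and you skip the intermediate step $F(u_A)\le F_A^+$, which shows that radius $5\omega_A$ would already suffice.
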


Given that only cells sufficiently close to points in $G$ can be split, we can easily derive an upper bound on the number of cells active at each depth.

\begin{restatable}{lemma}{lemmaNumber}\label{lem:split-count}
Under event $\mathcal E$, at every depth $h\le H$ at most
\(
 %O\left((K^d+1)2^d\left(\left\lceil\frac{6Ld}{\lambda}\right\rceil+2\right)^d\right)
 2\left(12d+2\right)^d|G|
\)
cells are ever created.
\end{restatable}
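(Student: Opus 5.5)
The plan is to count cells at depth $h$ by counting their parents. Every cell at depth $h\ge1$ is created by splitting a parent at depth $h-1$, and each split produces exactly two children. So it suffices to show that at most $(12d+2)^d|G|$ cells of depth $h-1$ are ever split. The root contributes a single cell at depth $0$, which is trivially within the bound.

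First, restrict to cells with $g(\ell_A)\ge 0$. Under $\mathcal E$, every selected cell, and hence every split cell, satisfies $g(\ell_A)\ge0$, as noted after \eqref{eq:optimism}. By Lemma~\ref{lem:exclude}, such a cell can be split only if $u_A\in B^{\uparrow,\alpha}_{6\omega_A}(y)$ for some $y\in G$. All cells at a fixed depth $h-1$ of the deterministic tree share the same side-length vector $s=(s_1,\dots,s_d)$, with $\omega:=\omega_{h-1}=\sum_i\alpha_i s_i$. These cells tile $[0,1]^d$ as a regular grid, so their upper corners $u_A$ lie on a lattice with spacing $s_i$ in coordinate $i$. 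The task therefore reduces to a geometric count: for a fixed $y\in G$, how many lattice points $u$ with $u\succeq y$ satisfy $\sum_i\alpha_i(u_i-y_i)\le 6\omega$?

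The key step is to exploit the largest-weighted-side bisection rule. I expect this to be the main obstacle, because the weighted sides $\alpha_i s_i$ may be very unequal and the counting must not depend on the ratios $\alpha_i s_i/\omega$. The constraint $\alpha_i(u_i-y_i)\le 6\omega$ gives, in coordinate $i$, at most $6\omega/(\alpha_i s_i)+1$ admissible lattice values. Multiplying this naively over coordinates could blow up when some $\alpha_i s_i\ll\omega$. The rule prevents this: since the split coordinate always halves a maximal weighted side, all weighted sides at a given depth are within a factor $2$ of one another. Concretely, $\max_i\alpha_i s_i\le 2\min_i\alpha_i s_i$; I will verify this by induction on depth, with care at the root, where the $\alpha_i$ may differ arbitrarily. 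If the ratio bound fails near the root, I will instead bound using $\omega\le d\max_i\alpha_i s_i$ together with the fact that coordinate $i$ has at most $1/s_i$ cells, which caps the count in that coordinate for small depths. With the factor-$2$ comparability, $\omega\le d\max_j\alpha_j s_j\le 2d\,\alpha_i s_i$ for every $i$, so each coordinate admits at most $12d+1$ lattice values, up to one extra value for boundary alignment, giving at most $12d+2$. Hence each $y$ contributes at most $(12d+2)^d$ candidate upper corners, i.e.\ at most that many cells.

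Finally, take a union bound over $y\in G$, which gives at most $(12d+2)^d|G|$ splittable cells at depth $h-1$. Multiplying by $2$ for the two children yields at most $2(12d+2)^d|G|$ cells at depth $h$.
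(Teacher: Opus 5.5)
Your overall route is the same as the paper's. You discard cells with $g(\ell_A)<0$ because they are never selected. You use Lemma~\ref{lem:exclude} to place the upper corner of every split cell in some $B^{\uparrow,\alpha}_{6\omega_A}(y)$. You then count admissible lattice values coordinate by coordinate, take a union over $y\in G$, and double for the two children.

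The step that fails as written is the invariant $\max_i\alpha_i s_i\le 2\min_i\alpha_i s_i$. It is false, so no induction on depth will prove it, and its failure is not confined to small depths. With $d=2$ and $\alpha=(1,10^{-3})$, coordinate $2$ is first bisected at depth $10$, and at depth $1$ the weighted sides are $(1/2,10^{-3})$. If $\alpha_2\le 2^{-H}$, coordinate $2$ is never bisected at any depth up to $H$. Very unequal $\alpha_i$ are exactly the regime the weighted bisection rule is built for, so this is not a boundary case you can set aside.

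What you need is the split/never-split dichotomy that the paper uses.
\begin{itemize}
\item \emph{Bisected at least once.} Let $h'$ be the last depth at which coordinate $i$ was bisected, and let $s^{(h')}$ be the side vector there. The rule gives $\alpha_i s^{(h')}_i=\max_j\alpha_j s^{(h')}_j$. Since side lengths never increase, $\alpha_i s_i=\tfrac12\max_j\alpha_j s^{(h')}_j\ge\tfrac12\max_j\alpha_j s_j$. Hence $\omega\le d\max_j\alpha_j s_j\le 2d\,\alpha_i s_i$, and your count of at most $12d+1$ lattice values in coordinate $i$ goes through.
\item \emph{Never bisected.} Then $s_i=1$ and $u_{A,i}=1$ is forced.
\end{itemize}
Your $1/s_i$ fallback handles the second case correctly. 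However, it only closes the argument once you know that every coordinate violating the factor-$2$ bound is a never-bisected one; otherwise $1/s_i$ could exceed $12d+2$. That fact is precisely the last-bisection argument above, which your proposal does not contain. With this replacement, your proof coincides with the paper's.
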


\paragraph{Regret Guarantee}
We combine the previous lemmas by bounding the regret accumulated over all rounds in which the same cell is selected. If a cell is still coarse, so that its geometric approximation error is large, it cannot remain active for long: once the statistical uncertainty becomes sufficiently small, the splitting condition is triggered and the cell is replaced by its two children. Since Lemma \ref{lem:split-count} provides a bound on the total number of active cells, we can then aggregate the regret contributions across cells and apply the Cauchy-Schwarz inequality to obtain the following regret bound:

\begin{theorem}\label{thm:regret}
 Algorithm~\ref{alg:zoom} with $\delta=1/T$ satisfies
\[
 \Reg_T =O\!\left(\sqrt{K^dT}\log T\right).
\]
\end{theorem}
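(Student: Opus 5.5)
We decompose on the clean event $\mathcal E$: by the clean-event lemma it fails with probability at most $\delta=1/T$, contributing at most $2T\cdot\delta=O(1)$ to the regret. On $\mathcal E$ we charge each round $t$ to the cell $A_t$ that was selected and played, so that $x_t=u_{A_t}$. By the optimism property \eqref{eq:optimism}, every selected cell has $g(\ell_{A_t})\ge0$ and $U_{A_t}\ge\mu^\star$. This gives
\[
\Delta(u_A)\le U_A-\mu(u_A)=g(\ell_A)F_A^+-g(u_A)F(u_A)\le g(\ell_A)\bigl(F_A^+-F(u_A)\bigr)+\omega_A F(u_A).
\]
The first term is at most $2g(\ell_A)\beta_A\le 2\beta_A$, using $g\le1$.

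For the second term we use the fact that the cell was played and not split. Then either $\operatorname{depth}(A)=H$, or $F_A^+\le\max\{4,g(\ell_A)/\omega_A\}\beta_A$.
\begin{itemize}
\item If the depth is $H$, then $\omega_A\le 1/T$, and the total contribution over all rounds is $O(1)$.
\item If $F_A^+\le 4\beta_A$, then $\omega_AF(u_A)\le \omega_A F_A^+\le 8\beta_A$, since $\omega_A\le2$.
\item If $F_A^+\le (g(\ell_A)/\omega_A)\beta_A$, then $\omega_A F(u_A)\le g(\ell_A)\beta_A\le\beta_A$.
\end{itemize}
In every case the per-round regret is at most $C\beta_{A_t}+1/T$ for an absolute constant $C$. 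Here $\beta_{A_t}$ is evaluated at the current count, which is legitimate because the decision to play uses that count. Summing over the rounds in which a fixed cell $A$ is played, with $n$ running from $0$ to $n_A^{\mathrm{fin}}-1$, gives
\[
\sum_{n}\beta \le 1+\sum_{n\ge1}\sqrt{c/n}\le 1+2\sqrt{c\,n_A^{\mathrm{fin}}}.
\]

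Let $N$ be the total number of cells ever created. By Lemma~\ref{lem:split-count}, summed over the $H+1$ depths, $N\le (H+1)\,2(12d+2)^d|G|=O(K^d\log T)$ for constant $d$. Cauchy--Schwarz together with $\sum_A n_A^{\mathrm{fin}}\le T$ then gives
\[
\Reg_T\le O(1)+C\Bigl(N+2\sqrt{c\,N\,T}\Bigr).
\]
With $c=O(\log(N_HT^2))=O(\log T)$, since $\log N_H=O(H)=O(d\log(dT))$, this is $O\bigl(\sqrt{K^dT\log T\cdot\log T}\bigr)=O(\sqrt{K^dT}\log T)$. The additive term $N=O(K^d\log T)$ is dominated whenever $K^d\le T$. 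Otherwise the trivial bound $2T\le 2\sqrt{K^dT}$ already suffices.

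The main obstacle is the middle step: showing that the regret of a played, unsplit cell is controlled by $\beta_A$ alone, independently of $\omega_A$. This is where the normalized split rule matters. A naive bound would leave an $\omega_A F(u_A)$ term that is not summable, so it is essential that the non-split condition, in both branches of the $\max$, converts $\omega_A F_A^+$ into $O(\beta_A)$. Some care is also needed in the first branch, where I use $\omega_A\le 2$. A secondary subtlety is that the bound on cells per depth from Lemma~\ref{lem:split-count} must be summed over the $H=O(\log T)$ depths; this costs the extra $\sqrt{\log T}$, which combines with $\sqrt{c}$ to give the final $\log T$ factor.
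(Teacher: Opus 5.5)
Your proposal is correct and follows the paper's proof essentially step by step. Optimism gives $\Delta(u_A)\le 2\beta_A g(\ell_A)+\omega_A F(u_A)$, the negated split rule turns the second term into $O(\beta_A)$ (or $O(1/T)$ at depth $H$), and the per-cell sums of $\beta_A$ are combined by Cauchy--Schwarz over the $O(HK^d)$ cells from Lemma~\ref{lem:split-count}. The only difference is cosmetic: you keep the additive $N$ term and dispatch it by cases on $K^d\le T$, whereas the paper absorbs it directly via $1\le\sqrt{m_A}$ for every played cell.
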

\begin{proof}
Condition on $\mathcal E$. By Lemma \ref{lem:split-count}, summing over the depths and including the root cell, we get that at most 
\(B= 1+2H(12d+2)^d|G|\)
cells are ever created.

For a cell $A$, let $m_A$ be the number of times cell $A$ is played. Consider the round in which it is played  for the $m_A$-th times. The same calculation as in \eqref{eq:split-gap} yields

\[\Delta(u_A)\le 2 \beta_A g(\ell_A) + \omega_A F(u_A)\] 

We consider two cases. If level smaller than $H$,
since it has not been split, we get that \( \omega_A F(u_A)\le  \beta_A g(\ell_A), \) 
or 
\( F(u_A)\le  4\beta_A \)
which implies \[\Delta(u_A)\le 2 \beta_A g(\ell_A) + \omega_A F(u_A)\le 10 \beta_A,\]
using $\omega_A\le 2$ and $g(\ell_A)\le 1$.
%Otherwise, we get 
%\[ d 2^{-H} L F(u_A)\le  \beta_A g(\ell_A) \]\textcolor{orange}{questo secondo caso non mi torna molto}
Otherwise, if the depth is $H$ we have that $$\omega_H\le d2^{-\lfloor H/d \rfloor}\le 1/T $$ and 
\[\Delta(u_A)\le 10 \beta_A  + T^{-1} F(u_A).\]

Additionally, let $\beta_{A,j}$ the value of $\beta_A$ in the rounds where $n_A=j-1$, we get:
\begin{align*}
\sum_{j=1}^{m_A}\beta_{A,j}
 &\le1+\sqrt c\sum_{n=1}^{m_A-1}n^{-1/2}\\
 &\le1+2\sqrt{cm_A}\le(1+2\sqrt c)\sqrt{m_A}
\end{align*}
and by applying the Cauchy-Schwarz inequality 
\begin{align*}
    \sum_{t=1}^T\Delta(x_t)&\le \sum_{t=1}^T \left(10\beta_{A(x_t),j_t}+T^{-1} F(u_{A(x_t)})\right)\\
    & \le 10(1+2\sqrt c)\sum_{A:m_A>0}\sqrt{m_A}+ 2T\,T^{-1}\\
    & \le 10(1+2\sqrt c)\sqrt{BT}+2,
\end{align*}
where $A(x_t)$ is the active cell at time $t$ such that $x_t=u_A$ and $j_t$ is the number of rounds cell $A(x_t)$ has been played till round $t-1$.
When $\mathcal E$ does not hold, the cumulative gap is at most $2T$. Taking expectations adds at most $\delta 2T$. Finally, $\delta=1/T$ concludes the proof.
\end{proof}

%For the camera-ready paper, if you are using \LaTeX, please make sure
%that you follow these instructions.
% (If you are not using \LaTeX,
%please make sure to achieve the same effect using your chosen
%typesetting package.)

\begin{figure*}[t]
	\centering
	\includegraphics[width=0.75\textwidth]{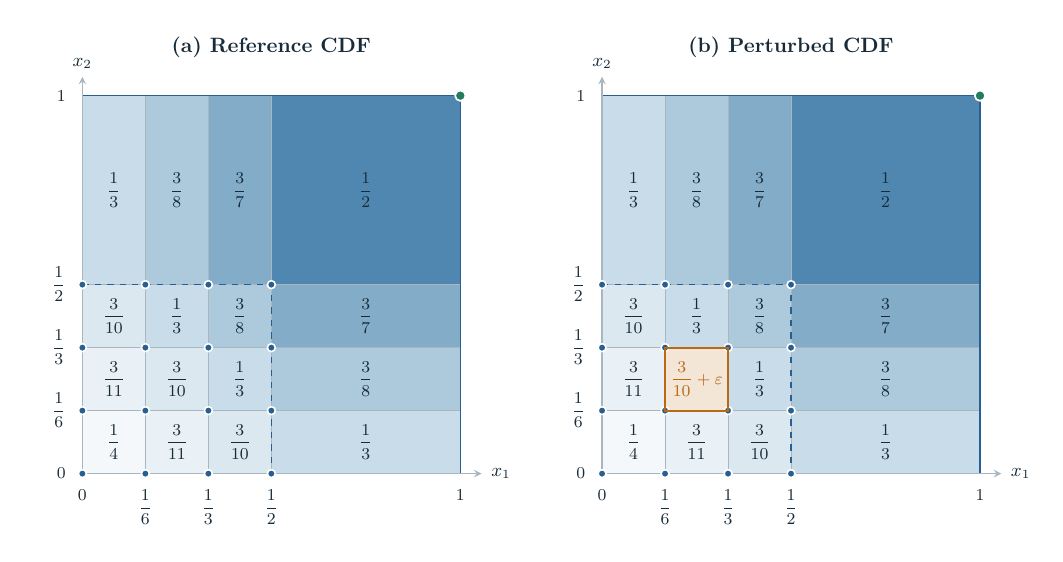}
    \vspace{-0.7cm}
\caption{Reference (a) and perturbed (b) CDFs for $d=2$ and $M=3$.
	Both distributions have support $ \mathcal{Z} \cup\{(1,1)\}$, with $\mathcal{Z} = \{0,1/6,1/3,1/2\}^2 $ .
	Equal blue shades indicate equal CDF values; the values written in the
	rectangles are exact. In panel (b), the perturbed cell is $A_j$, with $j=(1,1)$.}
	\label{fig:lower-bound-instances}
\end{figure*}

\section{A Tight Lower Bound}
\label{sec:lower-bound}
We establish a support-dependent regret lower bound, showing that the bound in Theorem~\ref{thm:regret} is tight up to logarithmic factors whenever $K\le {T^{1/(3d)}}$.  Characterizing the optimal regret bound in the large-support regime remains an open problem. The best known upper bound is $\tilde O (T^{3/4})$ in arbitrary dimension~\cite{castiglioni2026sample} and $\tilde O(T^{7/10})$ in the two-dimensional case \cite{castiglioni2026breakingt34barrierregret}.

\begin{theorem}\label{thm:lower-bound}
For every integer $d\ge1$, there exists a constant $\kappa_d>0$ such that, for all integers $K\ge4$ and $T\ge1$ and every algorithm, there exists an instance with marginal supports of size at most $K$ for which 
\[
\Reg_T \ge \kappa_d \min\left\{\sqrt{K^dT},T^{2/3}\right\}.
\]
\end{theorem}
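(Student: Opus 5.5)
We construct a family of $M^d+1$ instances in the pricing form $\mu(x)=g(x)F(x)$ with $g(x)=\beta-\langle\alpha,x\rangle$ (taking $\alpha=\mathbf 1/d$, or the multilateral-trade normalization with $d$ buyers), and reduce to an $M^d$-armed bandit via a change of measure. We take $M\approx K/2$ (so $K\ge4$ gives $M\ge 2$). When $K^d\ge T^{1/3}$ we instead use $M^d\approx T^{1/3}$, which lands in the $T^{2/3}$ regime; the supports remain of size at most $K$.

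\textbf{Reference instance.} Fix a grid $\mathcal Z=\{0,h,2h,\dots,Mh\}^d$ with $h$ of order $1/M$ (as in Figure~\ref{fig:lower-bound-instances}). The cells $A_j$, $j\in\{1,\dots,M\}^d$, have lower corners $z_j=h(j-\mathbf 1)$. We place mass on $\mathcal Z\cup\{\mathbf 1\}$ so that $F(z)\,g(z)=c_0$ for every lower corner in a chosen antichain-like family, while all other grid points are strictly worse by a margin. Concretely, we prescribe $F$ on the grid as $F(z)=c_0/g(z)$ along the relevant points and check that the induced inclusion–exclusion masses are nonnegative. This is the delicate part: $1/g$ is convex along lines, and we need the $d$-th order mixed differences of $c_0/g$ to be nonnegative. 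We expect this to work if $\beta$ is chosen large relative to $\sum\alpha_i$, so that $g$ stays bounded away from $0$ on $[0,1/2]^d$ and $1/g$ has positive mixed derivatives, since all its derivatives have sign pattern $\prod_i \partial_i$ positive. Leftover mass goes to the point $\mathbf 1$. Each marginal support then has size at most $M+2\le K$.

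\textbf{Perturbations.} For each $j$ we move a small mass $\eta$ within cell $A_j$: we add $\eta$ at $z_j$ and remove it from the upper corner $z_j+h\mathbf 1$ (or from points of the cell's upper boundary, so that it is absorbed outside). The result is that $F$ increases by $\eta$ exactly on $\{x: z_j\preceq x,\ x\not\succeq z_j+h\mathbf 1\}$ and is unchanged elsewhere. This makes $z_j$ optimal with gap roughly $\eta g(z_j)=\Theta(\eta)$ over the other corners. It changes the feedback distribution only for actions $x$ in a region $R_j$ attached to cell $j$. We choose the regions $R_j$ to be pairwise disjoint; this is where the mass must be removed carefully (for example, at the lower corners of neighboring cells shifted by $h$). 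Every action outside all the $R_j$ is $\Omega(1/M)\cdot F$-suboptimal, or at least $\Omega(\eta)$-suboptimal, under every instance. For an action in $R_j$, the KL divergence per play between the Bernoulli feedbacks is $O(\eta^2/(F(1-F)))=O(\eta^2)$, provided $F$ stays bounded away from $0$ and $1$ on the grid. We guarantee this by choosing $c_0$ and $\beta$ appropriately.

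\textbf{Change of measure.} Under the reference instance let $N_j$ be the expected number of plays in $R_j$; then $\sum_j N_j\le T$. By Pinsker, under instance $j$ the expected number of plays near $z_j$ differs from $N_j$ by at most $T\sqrt{\eta^2 N_j}$. Averaging over $j$ and using Cauchy–Schwarz, some $j$ has regret at least $\Omega\bigl(\eta T(1-1/M^d-\eta\sqrt{T/M^d})\bigr)$. Choosing $\eta\asymp\sqrt{M^d/T}$ gives $\Omega(\sqrt{M^dT})=\Omega(\sqrt{K^dT})$. This needs $\eta\le c/M$, i.e. the perturbation must stay smaller than the geometric margin $\Theta(h)$ separating corners. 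That constraint holds exactly when $M^{d+2}\lesssim T$, and otherwise we cap $M$, which is the source of the $T^{2/3}$ branch. Note that playing non-grid points inside a cell is dominated by its lower corner and reveals the same information as some corner, so these actions cost regret at least $\Omega(h\eta)$ or more. We will have to verify this with the exponent $M^d\sim T^{1/3}$, adjusting the cap to $M^{d}\asymp T^{1/3}$ as in the statement; we expect this bookkeeping, together with the positivity of the reference masses, to be the main obstacle.
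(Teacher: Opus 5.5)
Your architecture matches the paper's: a reference with $F_0=c_0/g$ on the whole grid so that every grid point has the same reward, leftover mass at $\mathbf 1$, one perturbed instance per cell, per-play KL of order $\eta^2$, and Pinsker plus Cauchy--Schwarz over $M^d$ cells. Your positive-mixed-derivative argument for nonnegativity of the inclusion--exclusion masses is a valid substitute for the paper's integral representation $1/a=\int_0^1 t^{a-1}\,dt$, and it needs only $g>0$ on the grid. There are, however, two gaps.

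\textbf{The perturbation is not localized.} Moving $\eta$ from $z_j+h\mathbf 1$ to $z_j$ changes $F$ on $\{x\succeq z_j\}\setminus\{x\succeq z_j+h\mathbf 1\}$. For $d\ge2$ this set contains strips running through many other cells. A cell at the top corner of the grid lies in $\Theta(M^{d-1})$ of these regions, so $\sum_j\mathbb E_0[\text{plays in }R_j]\le T$ fails. Moreover, under instance $j$ the corner $z_j+he_1$ also gains $\eta\,g(z_j+he_1)$, so its gap is only $\eta\alpha_1 h=\Theta(\eta h)$ rather than $\Omega(\eta)$.

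You note that mass ``must be removed carefully'' but never say how. The missing idea is the alternating-sign perturbation: place $\eta(-1)^{\sum_i s_i}$ at every vertex $z_{j+s}$, $s\in\{0,1\}^d$. With $b_i=\lfloor x_i/h\rfloor$, the identity $\sum_{s\in\{0,1\}^d,\ j+s\preceq b}(-1)^{\sum_i s_i}=\mathbf 1\{b=j\}$ gives $F_j-F_0=\eta\,\mathbf 1\{x\in A_j\}$ exactly for the half-open cell. The regions are therefore disjoint. Since all grid points tie at $c_0$ and $\mu_0\le c_0$ everywhere, every action outside $A_j$ has gap at least $\eta\,g(z_j)=\Omega(\eta)$ under instance $j$. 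No margin between corners and no antichain are needed.

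\textbf{The constraint behind the $T^{2/3}$ branch is misidentified.} The condition $\eta\le c/M$ plays no role in this construction. The cap $M^{d+2}\asymp T$ it suggests would give $\eta\asymp 1/M$, which for $d\ge2$ is far larger than the interior grid masses, so the perturbed instance would not be a distribution. It would also produce exponent $(d+1)/(d+2)$ rather than $2/3$. ``Adjusting the cap to match the statement'' is not an argument.

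The binding constraint is nonnegativity of the perturbed masses. A grid point with all coordinates positive has mass equal to the $d$-th mixed difference of $c_0/g$, which is $\Theta(h^d)=\Theta(M^{-d})$. Your own route gives this by integrating $\partial_1\cdots\partial_d(1/g)=d!\prod_i\alpha_i/g^{d+1}$ over an $h$-box; the paper proves $p(z_j)\ge\frac{d!}{4d^d}\rho^d$. Hence you must take $\eta=\gamma_d\min\{M^{-d},\sqrt{M^d/T}\}$, and the averaged regret is of order $\min\{T/M^d,\sqrt{M^dT}\}$. Choosing $M=\min\{K-2,\lfloor T^{1/(3d)}\rfloor\}$, with $M=2$ when $T<2^{3d}$, then yields the theorem.
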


We remark that the lower bound holds even for multilateral trade models with $d$ buyers and no sellers.

While the formal proof is deferred to Appendix \ref{app:lower-bound-proof}, we provide here a sketch of the construction.
We construct a family of instances with a common support, consisting of an equally spaced grid and the additional point $\mathbf{1}$. The instances differ only in the probabilities assigned to these points.

Fix $M\ge2$ and let $\rho=1/(2M)$. Each agent’s marginal support is $S_i=\{0,\rho,\ldots,M\rho,1\}$, while the joint support is $ \mathcal{Z} \cup\{\mathbf{1}\}$, where $\mathcal{Z}:=\{0,\rho,\ldots,M\rho\}^d$. This grid defines $M^d$ \emph{disjoint} cells in $[0,1/2]^d$, whose number grows exponentially with the number of agents. In addition, in all instances we take  \(g(x)=1-\sum_{i} x_i/d.\) %\mat{Non ideale che non è un istanza di multilateral trade. Credo non ci sia modo di fixarla formalmente dicendo che stiamo applicando una traslazione. D'altra parte è ovvio come estendere la proof.}

We index the grid points belonging to $\mathcal{Z}$ by $j\in\{0,\ldots,M\}^d$ and write $z_j=\rho j$.
We start from a reference distribution over valuations whose CDF satisfies
\[
F_0(z)=\frac{1}{4g(z)}
\qquad \text{for every } z\in\mathcal{Z}.
\]
Thus, $\mu_0(z)=g(z)F_0(z)=1/4$ at every grid point, and no action has a larger mean reward; see Figure~\ref{fig:lower-bound-instances}(a).
Precisely, let $j\in\{0,\ldots,M-1\}^d$ index a perturbed instance, with associated cell $A_j=\prod_i[\rho j_i,\rho(j_i+1))$ and lower corner $z_j$. Its vertices are the grid points $z_{j+s}$, with $s\in\{0,1\}^d$. Relative to the reference distribution, we increase the probability mass at each vertex by $\varepsilon$ if $\sum_i s_i$ is even and decrease it by $\varepsilon$ if it is odd. All other probabilities remain unchanged. The increases and decreases cancel outside $A_j$, so the feedback has the same distribution under both instances there. Inside $A_j$, the probability of trade increases by $\varepsilon$, and the mean reward at $z_j$ increases by at least $\varepsilon/2$, making $z_j$ optimal; see Figure~\ref{fig:lower-bound-instances}(b).

For any action $x\in A_j$, the KL divergence between the feedback distributions under the reference and perturbed instances equals that between two Bernoulli distributions with parameters $F_0(z_j)$ and $F_0(z_j)+\varepsilon$, respectively. Since these probabilities are bounded away from zero and one, this divergence is $O(\varepsilon^2)$.

The $M^d$ disjoint cells thus play a role similar to $M^d$ arms in a bandit lower bound: each alternative improves one cell by order $\varepsilon$, but can only be distinguished from the reference instance by exploring that cell.

For a sufficiently small constant $\gamma_d>0$, choosing $\varepsilon=\gamma_d\min\{M^{-d},\sqrt{M^d/T}\}$ keeps these distributions well-defined. A change-of-measure argument gives a regret lower bound of order
$\min\left\{\frac{T}{M^d},\sqrt{M^dT}\right\}$
Optimizing $M$ subject to each marginal support containing at most $K$ points yields Theorem~\ref{thm:lower-bound}. %The full proof is in Appendix~\ref{app:lower-bound-proof}.
\label{sec:lower-bound-end}

\bibliography{references}

\clearpage
\appendix
\section*{Supplementary Materials}

\section{Omitted Proofs}

\LemmaPalla*

\begin{proof}
Suppose $y=y(x)\in G$ exists and is such that $F(y)>0$. In this case we have that $F(x)=F(y)>0$ and by hypothesis $\lVert x-y\rVert_{1,\alpha}>\varepsilon$.
Hence, 
\begin{align*}
    \Delta(x)&\ge \mu(y)-\mu(x)= (g(y)-g(x))F(x) = \lVert x-y\rVert_{1,\alpha} F(x) >  \varepsilon F(x) .
\end{align*}

Otherwise, if $y(x)$ is not well-defined or is such that $F(y)=0$, then, this implies $F(x)=0$, and the statement becomes $\Delta(x)\ge 0$, which is true by the definition of the suboptimality gap.
\end{proof}

\LemmaClean*

\begin{proof}
For each fixed cell and sample count $1\le n\le T$, Hoeffding's inequality bounds the failure probability 
\[\Pr\{|\widehat F_{A,n}-F(u_A)|>\beta_{A,n}\}\le2e^{-2c}\le\frac{\delta}{N_HT}.\]
%The observations assigned to a cell are fresh Bernoulli draws at a fixed corner: one may equivalently assign each possible cell an independent infinite sample stream and reveal the next entry when it is played. 
Thus, a union bound over cells and sample counts proves that
%is valid despite adaptive sampling. Consequently 
$\Pr(\mathcal E)\ge1-\delta$.
\end{proof}

\lemmaExclude*

\begin{proof}
Let $A$ be the cell selected at a certain round, and assume that $u_A\notin \bigcup_{y\in G}B^{\uparrow,\alpha}_{6\omega_A}(y)$ and the depth of $A$ is smaller than $H$. Since $A$ is chosen it holds that
\begin{equation}\label{eq:optimality}
g(\ell_A) F^+_A \ge g(\ell_B) F^+_B  \quad \forall B \in \mathcal P  
\end{equation}
and we suppose by contradiction that \begin{equation}
     F_A^+> \max \left\{4,\frac{g(\ell_A)}{g(\ell_A)-g(u_A)}\right\}\beta_A. \label{eq:no-split}
\end{equation}
Then, if $\frac{g(\ell_A)}{g(\ell_A)-g(u_A)}\ge4$
\begin{equation*}%\label{eq:atleasthalf}
F(u_A) \ge F_A^+ - 2 \beta_A > F_A^+\left(1-\frac{2 \omega_A}{g(\ell_A)}      \right) \ge \frac{F_A^+}{2}>0.
\end{equation*}
and if $\frac{g(\ell_A)}{g(\ell_A)-g(u_A)}<4$
\[F(u_A)\ge F_A^+-2\beta_A> \frac{1}{2}F_A^+. \]

Therefore, in both cases, if by contradiction we assume that $A$ is split it must be true that
\begin{equation}\label{eq:atleasthalf}
    F(u_A)> \frac{1}{2}F_A^+.
\end{equation}

Moreover, let $B^*$ be the cell active at current round which includes the optimum. By \eqref{eq:optimism} and \eqref{eq:optimality}, we get
\[ g(\ell_A) F^+_A \ge g(\ell_{B^*}) F^+_{B^*}\ge \max_{x\in B^*}\, \mu(x)= \mu^\star\]

Hence, we get:
\begin{align}
 \Delta(u_A)
 &\le g(\ell_A)F_A^+-g(u_A)F(u_A)\notag\\
 &=g(\ell_A)\left(F_A^+- F(u_A)\right)+ \left(g(\ell_A)-g(u_A)\right)F(u_A)\notag\\
 &\le 2 \beta_A g(\ell_A)+\omega_A F(u_A)\notag\\
 & <2 \omega_A F^+_A+\omega_A F(u_A)\notag\\
 &\le3 \omega_A F^+_A <6 \omega_AF(u_A)\label{eq:split-gap}
\end{align}
where in the last inequality we used \eqref{eq:atleasthalf}.
However, by hypothesis we have
\[u_A\notin  B^{\uparrow,\alpha}_{6\omega_A}(y) \quad \forall y \in G\]
which means by Lemma \ref{lem:gap}
\[ \Delta(u_A)> 6\omega_A F(u_A) .\]
Hence, we reach a contradiction.
\end{proof}

\lemmaNumber*

\begin{proof}
First, we bound the number of cell that are split at a level $h$.
Fix $h<H$, and let us recall that at depth $h$ all cells of the full tree have the same side-length vector. Let $k_i(h)$ be the number of bisections in coordinate $i$ along a depth-$h$ branch. %These counts are independent of the branch, since the choice of split coordinate depends only on the side lengths. Write
%\begin{align*}
% w_i(h)&=\alpha_i2^{-k_i(h)},\\
%  \omega_h&=\sum_i w_i(h).
%\end{align*}
After at most $d$ bisections every weighted side is at most half the previous maximum: each side larger than that threshold is halved before any smaller side is selected. Since $\omega_0\le 2$,
\begin{equation*}%\label{eq:depth-error}
 \omega_h\le \,2^{-\lfloor h/d\rfloor+1}.
\end{equation*}

Every splitting cell belongs to one of two classes; $g(\ell_A)\ge0 $ or $g(\ell_A)<0$.

\begin{itemize}
    \item $\mathbf{g(\ell_A)\ge0 }.$
    If $g(\ell_A)\ge 0 $, Lemma~\ref{lem:exclude} shows that its upper corner belongs to $B^{\uparrow,\alpha}_{6\omega_A}(y)$ for some $y\in G$. 
\[
 \sum_{i=1}^d\alpha_i (u_{A,i}-y_i)\le 6\omega_A\le 6\omega_h.
\]
Therefore, any cell such that $g(\ell_A)\ge 0$ at level $h$ that wants to be split must guarantee for any $i$ 
\[y_i\le  u_{A,i}\le y_i + \frac{6\omega_h}{\alpha_i}.\]
If the direction $i$ was never split, then trivially there is only one possible value for $u_{A,i}$.
Let $r_{h,i}=2^{-k_i(h)}$ be the geometric length of the $i$-th coordinate of a cell at depth $h$.
If coordinate $i$ has previously been bisected, let $h^\prime$ be the last depth for which coordinate $i$ was split, i.e. 
\[\alpha_i r_{h',i}\ge \alpha_j r_{h',j} \qquad \forall j\neq i,\]
 then 
\begin{align*}
    2\alpha_i r_{h,i}&=\alpha_i r_{h',i}\ge \max_{j}\alpha_jr_{h',j}\ge \frac{1}{d}\sum_{j=1}^d\alpha_jr_{h',j}
    \ge \frac{1}{d}\omega_{h'}\ge \frac{1}{d}\omega_{h}
\end{align*}  
as the diameter of a cell in norm $\|\cdot \|_{1,\alpha}$ can only decrease when the depth of the tree increases.
Therefore we have that 
\[r_{h,i}\ge \frac{\omega_h}{2d\alpha_i},\]
which means that for all $y\in G$ there are at most $12d+2$ possible choices for $u_{A,i}$ for all $i$ that lead to a splitting. Considering that each cell is split in two, summing over all $y\in G$, we get that at level $h+1$ there are at most $2(12d+2)^d(|G|)$ active cells, generated by cells at level $h$ with $g(\ell_A)\ge 0$.

\item  $\mathbf{g(\ell_A)<0}.$
No cell A such that $g(\ell_A)<0$ will ever be selected and therefore split. In fact, if $A$ is selected this implies 
\[g(\ell_A)F_A^+\ge \max_{B\in \mathcal{P}}g(\ell_B)F_B^+\ge \mu^\star,\]
however, since $F_A^+>0$, then $g(\ell_A)F_A^+<0$, which is a contradiction since there exists $x \in [0,1]^d$ such that $g(x)\ge 0$, since $\mu^\star\ge g(x)F(x)\ge 0$.
\end{itemize}

Therefore, at each level $h\le H$ there are at most \(2(12d+2)^d|G|\) cells, which concludes the proof.
\end{proof}

\section{Proof of Theorem~\ref{thm:lower-bound}}
\label{app:lower-bound-proof}

We introduce a reference distribution whose support consists of an equally spaced grid in $[0,1/2]^d$ and the all-ones vector $\mathbf{1}=(1,\ldots,1)$. Fix integers $d\ge1$ and $M\ge2$, and set
\[
\rho=\frac{1}{2M},\qquad
z_j=\rho j,\quad j\in\{0,\ldots,M\}^d.
\]
The parameter $M$ is the number of grid intervals along each coordinate.
We define the set $\mathcal{V}$ as follows
\[
\mathcal Z=\{z_j:j\in\{0,\ldots,M\}^d\},
\qquad
\mathcal V=\mathcal Z\cup\{\mathbf1\}.
\]
Since $M\rho=1/2$, every point of the grid $\mathcal Z$ lies in $[0,1/2]^d$, and the additional point $\mathbf{1}$ is distinct from every grid point.
We define
\[
g(x)=1-\frac1d\sum_{i=1}^d x_i,
\qquad x\in[0,1]^d,
\]
and the auxiliary function $H(x)=\frac{1}{4g(x)}$ for every $x\in[0,1/2]^d$.
On $[0,1/2]^d$, we have $1/2\le g(x)\le1$, so $H$ is well defined and smooth.
We define our reference discrete probability distribution by assigning the following probabilities:
\begin{equation}
	\label{eq:mass-definition}
	p(x)=
	\begin{cases}
		\displaystyle
		\sum_{{s\in\{0,1\}^d \, : \, s\preceq j}}
		(-1)^{\sum_{i=1}^d s_i}H(z_{j-s}),
		& \text{if } x=z_j\in\mathcal Z,\\[1.2em]
		1/2,
		& \text{if } x=\mathbf1,\\[0.4em]
		0,
		& \text{otherwise}.
	\end{cases}
\end{equation}
Lemma~\ref{lem:validity} shows that $p$, defined in Equation~\eqref{eq:mass-definition}, defines a discrete probability distribution supported on $\mathcal{V}$. We denote its CDF by $F_0$. In Lemma~\ref{lem:cdf}, we show that, for every $z_k\in\mathcal Z$,
\begin{equation}
	\label{eq:cdf-identity}
	F_0(z_k)=H(z_k)=\frac{1}{4g(z_k)}.
\end{equation}
As a consequence, letting $\mu_0(x):=g(x)F_0(x)$, every point of the grid $\mathcal Z$ has the same expected reward:
\[
\mu_0(z_k)=g(z_k)F_0(z_k)
=g(z_k)\frac{1}{4g(z_k)}=\frac14.
\]

For each $j\in\{0,\ldots,M-1\}^d$, consider the cell
\[
A_j=\prod_{i=1}^d[\rho j_i,\rho(j_i+1)).
\]
These $M^d$ cells are pairwise disjoint and partition $[0,1/2)^d$.
The vertices of $A_j$ are the grid points $z_{j+s}$, with
$s\in\{0,1\}^d$. Fix
\begin{equation}
\label{eq:epsilon-range}
0<\varepsilon\le\frac{d!}{8d^d}\rho^d.
\end{equation}
We define the perturbed discrete probability distribution associated with $A_j$ by
\begin{equation}
\label{eq:perturbed-mass}
p_j(x)=
\begin{cases}
p(x)+\varepsilon(-1)^{\sum_{i=1}^d s_i},
&\text{if }x=z_{j+s}\text{ for some }s\in\{0,1\}^d,\\[0.4em]
p(x),&\text{otherwise}.
\end{cases}
\end{equation}
Thus we add $\varepsilon$ at vertices for which $\sum_i s_i$ is even
and subtract $\varepsilon$ at vertices for which it is odd.
In particular, the probability assigned to $z_j$ increases by $\varepsilon$, and the
probability assigned to $\mathbf1$ remains $1/2$.
Lemma~\ref{lem:perturbed-cdf} shows that $p_j$ defines a discrete probability
distribution with support $\mathcal V$. Denoting its CDF by $F_j$, the same
lemma shows that $F_j(x)$ exceeds $F_0(x)$ by exactly $\varepsilon$
on $A_j$ and coincides with $F_0(x)$ everywhere else in $[0,1]^d$.
Formally, 
\[
F_j(x):=F_0(x)+\varepsilon \mathbf1\{x\in A_j\}.
\]
Consequently, the expected reward under the perturbed instance is
\[
\mu_j(x):=g(x)F_j(x)
=\mu_0(x)+\varepsilon g(x)\mathbf1\{x\in A_j\}.
\]
In particular, since $z_j\in A_j$ and $g(z_j)\ge1/2$,
\[
\mu_j(z_j)=\frac14+\varepsilon g(z_j)
\ge\frac14+\frac{\varepsilon}{2}.
\]
Moreover, $\mu_0(x)\le1/4$ for every $x\in[0,1]^d$.
Indeed, if $x\ne\mathbf1$, let $z_k$ be the largest grid point
coordinatewise dominated by $x$. The grid points dominated by $x$
are exactly those dominated by $z_k$. Hence $F_0(x)=F_0(z_k)=1/(4g(z_k))$.
Since $g(x)\le g(z_k)$, we obtain $\mu_0(x)\le1/4$.
At $x=\mathbf1$, the expected reward is zero because $g(\mathbf1)=0$.
It follows that every action outside $A_j$ has expected reward at
most $1/4$ under instance $j$, whereas $z_j$ has expected reward at
least $1/4+\varepsilon/2$.
Thus each perturbed instance improves the reward in a single cell,
while leaving the probability of trade unchanged outside that cell.

Set $N=M^d$, and let $\mathbb P_0$ and $\mathbb P_j$ denote
the laws of the $T$-round interaction between the possibly
randomized algorithm and the stochastic environment, under
the reference instance and instance $j$, respectively.
These laws account for both the algorithm's internal randomness
and the randomness of the feedback. Expectations
under these laws are denoted by $\mathbb E_0$ and $\mathbb E_j$.
All instances share the support $\mathcal V$, with marginal
support sets $\{0,\rho,\ldots,M\rho,1\}$ of size $M+2$.
We prove the lower bound even when the learner knows this
common support, but not which instance $j$ generates the observations. Define
\(
N_j(T)=\sum_{t=1}^T\mathbf1\{x_t\in A_j\}.
\)
The gap outside $A_j$ is at least $\varepsilon/2$, so
\begin{equation}
\label{eq:regret-from-visits}
\mathrm{Reg}_{T,j}\ge
\frac{\varepsilon}{2}\bigl(T-\mathbb E_jN_j(T)\bigr).
\end{equation}
Here $\mathrm{Reg}_{T,j}$ denotes the learner's expected
regret under instance $j$.
Lemma~\ref{lem:kl-comparison} gives
\begin{equation}
\label{eq:visits-change-of-measure}
\mathbb E_jN_j(T)
\le\mathbb E_0N_j(T)+2\varepsilon T\sqrt{\mathbb E_0N_j(T)}.
\end{equation}
The cells are disjoint, hence
$\sum_{j\in\{0,\ldots,M-1\}^d}\mathbb E_0N_j(T)\le T$.
Averaging Equation~\eqref{eq:visits-change-of-measure} over the $N$
instances and applying Cauchy-Schwarz yields
\begin{equation}
\label{eq:average-visits}
\frac1N\sum_{j\in\{0,\ldots,M-1\}^d}\mathbb E_jN_j(T)
\le\frac TN+2\varepsilon T\sqrt{\frac TN}.
\end{equation}
Let $\gamma_d=d!/[8(2d)^d]$.
Since $d\ge1$, we have $\gamma_d\le1/16$. 
We choose
\(
\varepsilon=\gamma_d\min\{1/N,\sqrt{ N/T}\}.
\)
This choice satisfies Equation~\eqref{eq:epsilon-range}, since
\[
0<\varepsilon\le\frac{\gamma_d}{N}
=\frac{d!}{8d^d}\rho^d.
\]
Averaging over all $N=M^d$ perturbed instances, we obtain
\begin{align*}
\frac1N\sum_{j\in\{0,\ldots,M-1\}^d}\mathrm{Reg}_{T,j}
&\ge\frac{\varepsilon}{2}\left(
T-\frac1N\sum_{j\in\{0,\ldots,M-1\}^d}\mathbb E_jN_j(T)\right)\\
&\ge\frac{\varepsilon}{2}\left(
T-\frac TN-2\varepsilon T\sqrt{\frac TN}\right)\\
&\ge\frac{\varepsilon}{2}\left(T-\frac{3T}{4}\right)
=\frac{\varepsilon T}{8}.
\end{align*}
The first inequality follows from Equation~\eqref{eq:regret-from-visits},
the second from Equation~\eqref{eq:average-visits}, and the third uses
$N\ge2$, $\varepsilon\le\gamma_d\sqrt{N/T}$, and $\gamma_d\le1/16$.
At least one instance has regret no smaller than this average, so some
$j\in\{0,\ldots,M-1\}^d$ satisfies
\begin{equation}
\label{eq:finite-grid-regret}
\mathrm{Reg}_{T,j}\ge\frac{\varepsilon T}{8}
=\frac{\gamma_d}{8}\min\left\{\frac{T}{M^d},\sqrt{M^dT}\right\}.
\end{equation}

For $K\ge4$ and $T\ge2^{3d}$, choose
$M=\min\{K-2,\lfloor T^{1/(3d)}\rfloor\}$.
This choice satisfies $2\le M\le K-2$, as required by the construction,
and
\(
M^{3d}\le T \) and  
\(M\ge\frac12\min\{K,T^{1/(3d)}\}.
\)
For the instance $j$ in Equation~\eqref{eq:finite-grid-regret},
we obtain
\[
\begin{aligned}
\mathrm{Reg}_T=\mathrm{Reg}_{T,j}
&\ge\frac{\varepsilon T}{8}\\
&=\frac{\gamma_d}{8}
\min\left\{\frac{T}{M^d},\sqrt{M^dT}\right\}\\
&=\frac{\gamma_d}{8}\sqrt{M^dT}\\
&\ge\frac{\gamma_d}{8\,2^{d/2}}\sqrt T\,
\bigl(\min\{K,T^{1/(3d)}\}\bigr)^{d/2}\\
&=\frac{\gamma_d}{8\,2^{d/2}}
\min\{\sqrt{K^dT},T^{2/3}\}.
\end{aligned}
\]
The first inequality follows from Equation~\eqref{eq:finite-grid-regret}.
The minimum equals $\sqrt{M^dT}$ because $M^{3d}\le T$, and the
second inequality uses $M\ge\frac12\min\{K,T^{1/(3d)}\}$.
\par
If $K\ge4$ and $T<2^{3d}$, take $M=2$ instead.
Then Equation~\eqref{eq:finite-grid-regret} gives
$\mathrm{Reg}_T\ge\gamma_d T/(8\,2^d)$, which is at least
$\gamma_d\min\{\sqrt{K^dT},T^{2/3}\}/(8\,2^d)$ because $T\ge1$.

\subsection{Proof of the Lemmas} \label{sec:useful-lemmas}
\begin{lemma}
\label{lem:cancellation}
Let $\xi,\zeta\in\mathbb Z_{\ge0}^d$ satisfy $\xi\preceq \zeta$, and let $w>0$.
Then
\begin{equation}
\label{eq:weighted-factorization}
\begin{aligned}
\sum_{\substack{\eta\in\{0,1\}^d\\\xi+\eta\preceq \zeta}}
(-w)^{\sum_{i=1}^d \eta_i}
=\prod_{i=1}^d\left(
\sum_{\substack{\eta_i\in\{0,1\}\\\xi_i+\eta_i\le \zeta_i}}
(-w)^{\eta_i}\right)=\prod_{i=1}^d\bigl(1-w\mathbf1\{\xi_i<\zeta_i\}\bigr).
\end{aligned}
\end{equation}
\end{lemma}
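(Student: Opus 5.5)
The identity is a direct factorization argument, so the plan is to rewrite the constrained sum as a product of one-dimensional sums and then evaluate each factor.

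\textbf{Step 1: decouple the constraint.} The constraint $\xi+\eta\preceq\zeta$ is a conjunction of the coordinatewise conditions $\xi_i+\eta_i\le\zeta_i$. Hence the index set $\{\eta\in\{0,1\}^d:\xi+\eta\preceq\zeta\}$ is the Cartesian product $\prod_{i=1}^d E_i$, where
\[
E_i=\{\eta_i\in\{0,1\}:\xi_i+\eta_i\le\zeta_i\}.
\]

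\textbf{Step 2: factor the summand.} The summand $(-w)^{\sum_i\eta_i}=\prod_i(-w)^{\eta_i}$ is a product of functions each depending on a single coordinate. Summing a product over a Cartesian product gives the product of the one-dimensional sums (the distributive law). This is the first equality.

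\textbf{Step 3: evaluate each factor.} Since $\xi_i\le\zeta_i$, the value $\eta_i=0$ always belongs to $E_i$. It contributes $(-w)^0=1$.

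The value $\eta_i=1$ belongs to $E_i$ if and only if $\xi_i+1\le\zeta_i$. Because the entries are integers, this is equivalent to $\xi_i<\zeta_i$. When allowed, it contributes $-w$.

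Each factor therefore equals $1-w\mathbf 1\{\xi_i<\zeta_i\}$, which gives the second equality.

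The only step needing any care is the use of integrality in Step 3, which turns $\xi_i+1\le\zeta_i$ into the strict inequality $\xi_i<\zeta_i$. The condition $w>0$ is not actually needed for the identity; it holds for any real $w$.
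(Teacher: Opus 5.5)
Your proposal is correct and follows essentially the same route as the paper: split the coordinatewise constraint, factor the sum via the distributive law, and evaluate each one-dimensional factor using integrality to turn $\xi_i+1\le\zeta_i$ into $\xi_i<\zeta_i$. Your remark that $w>0$ is not needed is also correct; the identity holds for any real $w$.
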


\begin{proof}
Since the constraint $\xi+\eta\preceq \zeta$ is coordinatewise, we can write
\begin{align*}
\sum_{\substack{\eta\in\{0,1\}^d\\\xi+\eta\preceq \zeta}}
(-w)^{\sum_{i=1}^d \eta_i}
&=\sum_{\substack{\eta_1\in\{0,1\}\\\xi_1+\eta_1\le \zeta_1}}
\cdots
\sum_{\substack{\eta_d\in\{0,1\}\\\xi_d+\eta_d\le \zeta_d}}
(-w)^{\eta_1+\cdots+\eta_d}\\
&=\left(\sum_{\substack{\eta_1\in\{0,1\}\\\xi_1+\eta_1\le \zeta_1}}
(-w)^{\eta_1}\right)\cdots
\left(\sum_{\substack{\eta_d\in\{0,1\}\\\xi_d+\eta_d\le \zeta_d}}
(-w)^{\eta_d}\right)\\
&=\prod_{i=1}^d\left(
\sum_{\substack{\eta_i\in\{0,1\}\\\xi_i+\eta_i\le \zeta_i}}
(-w)^{\eta_i}\right).
\end{align*}
If $\xi_i=\zeta_i$, only $\eta_i=0$ is admissible; if $\xi_i<\zeta_i$, both
$\eta_i=0$ and $\eta_i=1$ are admissible, since $\xi_i$ and $\zeta_i$ are integers.
Thus each factor satisfies
\[
\sum_{\substack{\eta_i\in\{0,1\}\\\xi_i+\eta_i\le \zeta_i}}(-w)^{\eta_i}
=\begin{cases}
1,&\xi_i=\zeta_i,\\
1-w,&\xi_i<\zeta_i.
\end{cases}
\]
This proves Equation~\eqref{eq:weighted-factorization}.
\end{proof}

\begin{lemma}
\label{lem:grid-masses}
For every $j\in\{0,\ldots,M\}^d$, let
$r(j)=|\{i\in\{1,\ldots,d\}:j_i>0\}|$ be the number of its positive
coordinates. The probabilities defined in Equation~\eqref{eq:mass-definition}
satisfy
\begin{equation}
\label{eq:mass-integral}
p(z_j)=\frac14\int_0^1
 t^{g(z_j)-1}(1-t^{\rho/d})^{r(j)}\,dt.
 \end{equation}
 Furthermore, the following holds:
 \begin{equation}
 	\label{eq:mass-integral-1}
 	p(z_j)\ge\frac{r(j)!}{4d^{r(j)}}\rho^{r(j)}
\ge\frac{d!}{4d^d}\rho^d>0.
 \end{equation}
\end{lemma}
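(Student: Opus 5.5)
The plan is to turn the alternating sum defining $p(z_j)$ in Equation~\eqref{eq:mass-definition} into a single integral using an integral representation of $1/g$. Then Lemma~\ref{lem:cancellation} factorizes the resulting sum of powers, and a Beta-function evaluation gives the lower bound.

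\emph{Integral identity.} Since $g(x)\in[1/2,1]$ on $[0,1/2]^d$, we have $H(x)=\frac{1}{4g(x)}=\frac14\int_0^1 t^{g(x)-1}\,dt$. For $s\in\{0,1\}^d$ with $s\preceq j$, linearity of $g$ gives
\[
g(z_{j-s})=g(z_j)+\frac{\rho}{d}\sum_i s_i .
\]
Hence $H(z_{j-s})=\frac14\int_0^1 t^{g(z_j)-1}\bigl(t^{\rho/d}\bigr)^{\sum_i s_i}\,dt$. Substituting into Equation~\eqref{eq:mass-definition} and exchanging the finite sum with the integral, the integrand becomes
\[
t^{g(z_j)-1}\sum_{s\in\{0,1\}^d,\ s\preceq j}\bigl(-t^{\rho/d}\bigr)^{\sum_i s_i}.
\]
Apply Lemma~\ref{lem:cancellation} with $\xi=0$, $\zeta=j$ and $w=t^{\rho/d}$, which is positive for $t\in(0,1]$. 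The inner sum equals $\prod_i\bigl(1-t^{\rho/d}\mathbf1\{j_i>0\}\bigr)=(1-t^{\rho/d})^{r(j)}$. This is Equation~\eqref{eq:mass-integral}.

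\emph{Lower bound.} Set $a=\rho/d$, $\gamma=g(z_j)$ and $r=r(j)$, and substitute $v=t^{a}$. The integral becomes
\[
\frac1a\int_0^1 v^{\gamma/a-1}(1-v)^r\,dv=\frac1a B\!\left(\frac{\gamma}{a},r+1\right)=\frac{r!\,a^{r}}{\prod_{k=0}^{r}(\gamma+ka)},
\]
where the last step uses $\Gamma(x+r+1)=\Gamma(x)\prod_{k=0}^r(x+k)$. Each factor satisfies $\gamma+ka=1-\frac{\rho}{d}\bigl(\sum_i j_i-k\bigr)$. Since $k\le r\le\sum_i j_i$ (each positive coordinate is at least $1$), every factor is at most $1$. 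It follows that
\[
p(z_j)\ge\frac{r!}{4}\Bigl(\frac{\rho}{d}\Bigr)^{r},
\]
which is the first inequality in Equation~\eqref{eq:mass-integral-1}.

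For the second inequality, the ratio of consecutive terms $r!\rho^r/d^r$ from $r$ to $r+1$ is $(r+1)\rho/d\le\rho\le 1/4<1$. So the sequence decreases in $r$ and is minimized at $r=d$, giving $\frac{d!}{4d^d}\rho^d>0$.

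The only delicate point is the Beta-function step: one must check that the product $\prod_k(\gamma+ka)$ is at most $1$. This uses $r(j)\le\sum_i j_i$, not merely $r(j)\le d$.
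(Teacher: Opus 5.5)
Your proof is correct. The integral identity is derived as in the paper: via $1/a=\int_0^1t^{a-1}\,dt$, linearity of $g$, and Lemma~\ref{lem:cancellation} with $w=t^{\rho/d}$. Your monotonicity argument for the second inequality is the paper's product bound $\prod_{m=r(j)+1}^{d}m\rho/d\le 1$ in different words.

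The genuine difference is in the first lower bound. The paper substitutes $t=e^{-u}$ and applies the pointwise bound $1-e^{-(\rho/d)u}\ge(\rho/d)u\,e^{-(\rho/d)u}$. It then uses the single inequality $g(z_j)+r(j)\rho/d\le1$ to compare with the Gamma integral $\int_0^\infty u^{r(j)}e^{-u}\,du=r(j)!$. You instead substitute $v=t^{\rho/d}$ and evaluate the integral exactly as a Beta function, which yields the closed form
\[
p(z_j)=\frac{r(j)!\,(\rho/d)^{r(j)}}{4\prod_{k=0}^{r(j)}\bigl(g(z_j)+k\rho/d\bigr)} .
\]
Both routes hinge on the same fact $r(j)\le\sum_i j_i$; the paper's inequality is precisely your largest factor, $k=r(j)$.

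The paper's argument is a pure inequality needing only the Gamma integral. Yours needs the Beta--Gamma identity (or repeated integration by parts) but gives an exact expression. That expression also shows the lower bound is tight up to a factor $2^{r(j)+1}$, since every factor is at least $g(z_j)\ge 1/2$. The same observation, that all factors are positive, is what lets you invert the bound on the product to get the lower bound on $p(z_j)$, so you should state it explicitly.
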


\begin{proof}
We use the elementary identity
\begin{equation}
\label{eq:reciprocal-integral}
\int_0^1 t^{a-1}\,dt
=\frac1a,\qquad a>0.
\end{equation}
For every $s\in\{0,1\}^d$ with $s\preceq j$,
\begin{equation}
\label{eq:shifted-g}
g(z_{j-s})
=1-\frac{\rho}{d}\sum_{i=1}^d(j_i-s_i)
=g(z_j)+\frac{\rho}{d}\sum_{i=1}^d s_i.
\end{equation}
Since $g(z_{j-s})\ge1/2>0$, we can apply
Equation~\eqref{eq:reciprocal-integral} with $a=g(z_{j-s})$.
Using Equation~\eqref{eq:shifted-g}, we obtain
\begin{equation}
\label{eq:h-integral}
\begin{aligned}
H(z_{j-s})=\frac{1}{4g(z_{j-s})}
=\frac14\int_0^1 t^{g(z_{j-s})-1}\,dt=\frac14\int_0^1
 t^{g(z_j)-1}t^{(\rho/d)\sum_i s_i}\,dt.
\end{aligned}
\end{equation}
Substituting Equation~\eqref{eq:h-integral} into
Equation~\eqref{eq:mass-definition} and exchanging the finite sum
with the integral, we obtain
\[
p(z_j)=\frac14\int_0^1 t^{g(z_j)-1}
\sum_{\substack{s\in\{0,1\}^d\\s\preceq j}}
(-1)^{\sum_i s_i}t^{(\rho/d)\sum_i s_i}\,dt.
\]
For every fixed $t\in(0,1)$, apply Lemma~\ref{lem:cancellation}
with $\xi=\mathbf0$, $\zeta=j$, and $w=t^{\rho/d}>0$. This gives
\begin{align*}
\sum_{\substack{s\in\{0,1\}^d\\s\preceq j}}
(-1)^{\sum_i s_i}t^{(\rho/d)\sum_i s_i}
=\sum_{\substack{s\in\{0,1\}^d\\s\preceq j}}
(-t^{\rho/d})^{\sum_i s_i}
=\prod_{i=1}^d\bigl(1-t^{\rho/d}\mathbf1\{j_i>0\}\bigr)
=(1-t^{\rho/d})^{r(j)}.
\end{align*}
The last equality holds because exactly $r(j)$ coordinates of $j$ are
positive. Substituting this expression into the integral proves Equation~\eqref{eq:mass-integral}.

In Equation~\eqref{eq:mass-integral}, set $t=e^{-u}$.
Then $dt=-e^{-u}\,du$, and the endpoints $t=0,1$ correspond to
$u=\infty,0$, respectively. Reversing the integration limits gives
\[
p(z_j)=\frac14\int_0^\infty
 e^{-g(z_j)u}(1-e^{-(\rho/d)u})^{r(j)}\,du.
\]
For every $u\ge0$,
\[
1-e^{-(\rho/d)u}
=\int_0^{(\rho/d)u}e^{-v}\,dv
\ge\frac{\rho}{d}u\,e^{-(\rho/d)u}.
\]
Moreover, $\sum_i j_i\ge r(j)$, so
\[
g(z_j)+\frac{r(j)\rho}{d}
=1-\frac{\rho}{d}\left(\sum_i j_i-r(j)\right)\le1.
\]
Consequently,
\begin{align*}
p(z_j)
&\ge\frac14\left(\frac{\rho}{d}\right)^{r(j)}
\int_0^\infty u^{r(j)} e^{-(g(z_j)+r(j)\rho/d)u}\,du \ge\frac14\left(\frac{\rho}{d}\right)^{r(j)}
\int_0^\infty u^{r(j)} e^{-u}\,du
=\frac{r(j)!}{4d^{r(j)}}\rho^{r(j)}.
\end{align*}
The last integral is $1$ when $r(j)=0$. For $r(j)\ge1$,
integration by parts gives
\[
\int_0^\infty u^{r(j)} e^{-u}\,du
=\bigl[-u^{r(j)} e^{-u}\bigr]_0^\infty
+r(j)\int_0^\infty u^{r(j)-1}e^{-u}\,du
=r(j)\int_0^\infty u^{r(j)-1}e^{-u}\,du.
\]
Repeating this integration-by-parts step $r(j)$ times and using
$\int_0^\infty e^{-u}\,du=1$, we obtain
\[
\int_0^\infty u^{r(j)}e^{-u}\,du=r(j)!.
\]
This proves the first inequality in Equation~\eqref{eq:mass-integral-1}. Finally, since $r(j)\le d$ and $\rho\le1$,
\[
\frac{d!\rho^d/d^d}{r(j)!\rho^{r(j)}/d^{r(j)}}
=\prod_{m=r(j)+1}^d\frac{m\rho}{d}\le1,
\]
where an empty product equals one. Hence, for every grid point,
\begin{equation}
\label{eq:grid-mass-lower-bound}
p(z_j)\ge\frac{r(j)!}{4d^{r(j)}}\rho^{r(j)}
\ge\frac{d!}{4d^d}\rho^d,
\end{equation}
concluding the proof.
\end{proof}

\begin{lemma}
\label{lem:validity}
The function $p$ defined in Equation~\eqref{eq:mass-definition} defines a discrete probability distribution. In particular,
\[
p(x)\ge0\quad \forall x\in\mathcal V ,
\qquad
\sum_{x\in\mathcal V}p(x)=1.
\]
Moreover, $p(x)>0$ at every $x\in\mathcal V$, so its support is
exactly $\mathcal V$.
\end{lemma}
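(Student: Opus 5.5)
Nonnegativity and strict positivity on the grid $\mathcal Z$ are already given by Lemma~\ref{lem:grid-masses}: every $z_j$ has $p(z_j)\ge \frac{d!}{4d^d}\rho^d>0$. The point $\mathbf 1$ receives mass $1/2>0$, and every other point receives mass $0$, so the support is exactly $\mathcal V$. It remains to show that the masses sum to one, that is,
\[
\sum_{j\in\{0,\ldots,M\}^d} p(z_j)=\frac12 .
\]

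We compute this sum by telescoping the inclusion--exclusion (M\"obius) formula in \eqref{eq:mass-definition}. Exchanging the order of summation, we write
\[
\sum_{j}p(z_j)=\sum_{j}\sum_{s\in\{0,1\}^d,\,s\preceq j}(-1)^{\sum_i s_i}H(z_{j-s}).
\]
Setting $k=j-s$, this becomes
\[
\sum_{k\in\{0,\ldots,M\}^d}H(z_k)\sum_{\substack{s\in\{0,1\}^d\\ k+s\preceq (M,\ldots,M)}}(-1)^{\sum_i s_i}.
\]
We evaluate the inner sum with Lemma~\ref{lem:cancellation}, taking $\xi=k$, $\zeta=(M,\ldots,M)$ and $w=1$. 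It gives
\[
\prod_i\bigl(1-\mathbf 1\{k_i<M\}\bigr),
\]
which equals $1$ if $k=(M,\ldots,M)$ and $0$ otherwise. Hence the total grid mass is
\[
H(z_{(M,\ldots,M)})=\frac{1}{4g(\tfrac12\mathbf 1)}=\frac{1}{4\cdot\tfrac12}=\frac12,
\]
using $M\rho=1/2$, so that $g(\tfrac12\mathbf 1)=1-\tfrac12=\tfrac12$. Adding the mass $1/2$ at $\mathbf 1$ gives total mass $1$.

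The only step needing care is the reindexing. We must check that the pairs $(j,s)$ with $s\preceq j$ and $j\in\{0,\ldots,M\}^d$ correspond bijectively to the pairs $(k,s)$ with $k\in\{0,\ldots,M\}^d$ and $k+s\preceq M\mathbf 1$. This holds because $k=j-s\succeq 0$ exactly when $s\preceq j$, and $j=k+s$ lies in the grid exactly when $k+s\preceq M\mathbf 1$. Once this is verified, the cancellation lemma finishes the argument. An alternative I would keep in reserve is to use the integral formula \eqref{eq:mass-integral} and sum the binomial-type expressions directly, but the telescoping route is cleaner.
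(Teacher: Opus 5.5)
Your proof is correct and follows the paper's argument essentially step for step. You take positivity from Lemma~\ref{lem:grid-masses}, reindex with $k=j-s$, and apply Lemma~\ref{lem:cancellation} with $\xi=k$, $\zeta=M\mathbf 1$, $w=1$, which collapses the grid mass to $H(z_{M\mathbf 1})=1/2$; your explicit verification of the bijection in the reindexing spells out a step the paper states only briefly.
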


\begin{proof}
By Lemma~\ref{lem:grid-masses}, $p(z_j)>0$ at every grid point.
Also, $p(\mathbf1)=1/2>0$, and $p(x)=0$ outside $\mathcal V$.
It therefore remains to show that the probabilities sum to one.
We use Lemma~\ref{lem:cancellation} to compute the total probability assigned to the grid. Summing Equation~\eqref{eq:mass-definition} over $\mathcal Z$ gives
\[
\sum_{j\in\{0,\ldots,M\}^d}p(z_j)
=
\sum_{j\in\{0,\ldots,M\}^d}
\sum_{\substack{s\in\{0,1\}^d\\s\preceq j}}
(-1)^{\sum_i s_i}H(z_{j-s}).
\]
Set $k=j-s$, or equivalently $j=k+s$. The original constraints on
$(j,s)$ are equivalent to
\(
s\in\{0,1\}^d, \mathbf0\preceq k,
 k+s\preceq M\mathbf1.
\)
Therefore, we can write:
\begin{align*}
\sum_{j\in\{0,\ldots,M\}^d}p(z_j)
= \hspace{-3mm}
\sum_{k\in\{0,\ldots,M\}^d}H(z_k)
\sum_{\substack{s\in\{0,1\}^d\\k+s\preceq M\mathbf1}}
(-1)^{\sum_i s_i}
= \hspace{-4mm}
\sum_{k\in\{0,\ldots,M\}^d}  \hspace{-4mm} H(z_k) 
\mathbf1\{k=M\mathbf1\}
=H(z_{M\mathbf1}).
\end{align*}
The second equality follows from Lemma~\ref{lem:cancellation} with $\xi=k$, $\zeta=M\mathbf1$, and $w=1$. Since $M\rho=1/2$,
\[
H(z_{M\mathbf1})
=H\!\left(\tfrac12\mathbf1\right)
=\frac{1}{4\left(1-\frac1d\sum_{i=1}^d\frac12\right)}
=\frac{1}{4(1-1/2)}
=\frac12.
\]
Finally, the additional point $\mathbf1$ lies outside the small grid
and has probability $1/2$.  This proves that $p$ defines a discrete probability distribution. Its support is
exactly $\mathcal V$, {because every point in $\mathcal V$ has strictly}
positive probability and all other points have probability zero.
\end{proof}

\begin{lemma}
\label{lem:cdf}
For every $z_k\in\mathcal Z$,
\begin{equation}
\label{eq:cdf-identity-lemma}
F_0(z_k)=H(z_k)=\frac{1}{4g(z_k)}.
\end{equation}
\end{lemma}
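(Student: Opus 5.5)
The approach is to compute $F_0(z_k)$ directly as the sum of the masses $p(z_j)$ over grid points $z_j\preceq z_k$. The point $\mathbf1$ does not contribute, since $z_k\in[0,1/2]^d$ and so $\mathbf1\not\preceq z_k$. A grid point $z_j$ satisfies $z_j\preceq z_k$ exactly when $j\preceq k$, because $\rho>0$. Hence
\[
F_0(z_k)=\sum_{j\in\{0,\ldots,M\}^d,\ j\preceq k} p(z_j)
=\sum_{j\preceq k}\ \sum_{\substack{s\in\{0,1\}^d\\ s\preceq j}}(-1)^{\sum_i s_i}H(z_{j-s}).
\]
This is the same Möbius-inversion computation used in Lemma~\ref{lem:validity}, with $M\mathbf1$ replaced by $k$.

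Next I reindex exactly as in Lemma~\ref{lem:validity}, setting $m=j-s$. The constraints $s\in\{0,1\}^d$, $s\preceq j$ and $j\preceq k$ become $s\in\{0,1\}^d$, $\mathbf0\preceq m$ and $m+s\preceq k$. This gives
\[
F_0(z_k)=\sum_{\mathbf0\preceq m\preceq k}H(z_m)\sum_{\substack{s\in\{0,1\}^d\\ m+s\preceq k}}(-1)^{\sum_i s_i}.
\]
Lemma~\ref{lem:cancellation} with $\xi=m$, $\zeta=k$ and $w=1$ turns the inner sum into $\prod_i(1-\mathbf1\{m_i<k_i\})=\mathbf1\{m=k\}$. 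So the double sum collapses to $H(z_k)=1/(4g(z_k))$, which is the claim.

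The only care needed is the bookkeeping of the index change: the map $(j,s)\mapsto(m,s)$ must be a bijection between the two constraint sets. This holds because all indices are integers and $k\preceq M\mathbf1$ keeps every $m$ inside $\{0,\ldots,M\}^d$. Nothing else is an obstacle: all sums are finite, and positivity of $p$ (Lemma~\ref{lem:grid-masses}) is not even needed for this identity.
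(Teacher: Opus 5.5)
Your proposal is correct and matches the paper's proof step for step. Like the paper, you restrict the sum to grid points with $j\preceq k$ and reindex by $j-s$. You then collapse the inner alternating sum to $\mathbf1\{\cdot=k\}$ using Lemma~\ref{lem:cancellation} with $\zeta=k$ and $w=1$.
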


\begin{proof}
Fix $k\in\{0,\ldots,M\}^d$. The point $\mathbf1$ does not contribute
to $F_0(z_k)$, because $z_k\in[0,1/2]^d$ and hence
$\mathbf1\npreceq z_k$. Also, since $\rho>0$, the condition
$z_j\preceq z_k$ is equivalent to $j\preceq k$. Therefore,
\begin{align*}
F_0(z_k)
=\sum_{\mathbf0\preceq j\preceq k}p(z_j)
=\sum_{\mathbf0\preceq j\preceq k}
\sum_{\substack{s\in\{0,1\}^d\\s\preceq j}}
(-1)^{\sum_i s_i}H(z_{j-s}).
\end{align*}
Set $\ell=j-s$, so that $j=\ell+s$. The constraints become
\(
\ell\succeq\mathbf0, s\in\{0,1\}^d, \ell+s\preceq k.
\)
For each $\ell$ with $\mathbf0\preceq\ell\preceq k$,
we apply Lemma~\ref{lem:cancellation} with $\xi=\ell$, $\zeta=k$,
and $w=1$. This gives
\begin{align*}
F_0(z_k)
=\sum_{\mathbf0\preceq\ell\preceq k}H(z_\ell)
\sum_{\substack{s\in\{0,1\}^d\\\ell+s\preceq k}}
(-1)^{\sum_i s_i} =\sum_{\mathbf0\preceq\ell\preceq k}H(z_\ell)\mathbf1\{\ell=k\}
=H(z_k)=\frac{1}{4g(z_k)},
\end{align*}
concluding the proof.
\end{proof}

\begin{lemma}
\label{lem:perturbed-cdf}
For every $j\in\{0,\ldots,M-1\}^d$ and every $\varepsilon$ satisfying
Equation~\eqref{eq:epsilon-range}, the function $p_j$ defined in
Equation~\eqref{eq:perturbed-mass} defines a discrete probability distribution with support
$\mathcal V$. Its CDF satisfies
\begin{equation}
\label{eq:localized-cdf}
F_j(x)-F_0(x)=\varepsilon\mathbf1\{x\in A_j\},
\qquad\text{for all }x\in[0,1]^d.
\end{equation}
\end{lemma}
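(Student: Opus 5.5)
We prove the two claims of Lemma~\ref{lem:perturbed-cdf} separately. For validity, only the $2^d$ vertices $z_{j+s}$ of $A_j$ have their masses changed, each by $\pm\varepsilon$. We show two things:
\begin{itemize}
\item[(i)] The total mass is unchanged. Indeed, $\sum_{s\in\{0,1\}^d}(-1)^{\sum_i s_i}=\prod_{i=1}^d(1-1)=0$ for $d\ge1$; this is Lemma~\ref{lem:cancellation} with $\xi=\mathbf0$, $\zeta=\mathbf1$, $w=1$.
\item[(ii)] Every mass stays strictly positive. By Lemma~\ref{lem:grid-masses}, $p(z_{j+s})\ge \frac{d!}{4d^d}\rho^d$. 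Under Equation~\eqref{eq:epsilon-range}, $\varepsilon\le\frac{d!}{8d^d}\rho^d$, so $p_j(z_{j+s})\ge\frac{d!}{8d^d}\rho^d>0$.
\end{itemize}
The mass at $\mathbf1$ is untouched and equals $1/2$, and nothing is added outside $\mathcal V$. Hence $p_j$ is a probability distribution with support exactly $\mathcal V$.

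For the CDF identity, we write
\[
F_j(x)-F_0(x)=\varepsilon\sum_{s\in\{0,1\}^d:\ z_{j+s}\preceq x}(-1)^{\sum_i s_i}.
\]
The key step is that the constraint $z_{j+s}\preceq x$ factorizes across coordinates: it holds iff $\rho(j_i+s_i)\le x_i$ for every $i$. The sum therefore equals
\[
\prod_{i=1}^d\Bigl(\sum_{s_i\in\{0,1\}:\ \rho(j_i+s_i)\le x_i}(-1)^{s_i}\Bigr),
\]
the same factorization used in the proof of Lemma~\ref{lem:cancellation}. Each factor takes one of three values:
\begin{itemize}
\item $0$ if $x_i<\rho j_i$, since no term is admissible;
\item $1$ if $\rho j_i\le x_i<\rho(j_i+1)$, since only $s_i=0$ is admissible;
\item $1-1=0$ if $x_i\ge\rho(j_i+1)$, since both terms are admissible.
\end{itemize}
So the factor is exactly $\mathbf1\{x_i\in[\rho j_i,\rho(j_i+1))\}$. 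The product of these indicators is $\mathbf1\{x\in A_j\}$, which gives Equation~\eqref{eq:localized-cdf}.

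This holds for all $x\in[0,1]^d$. The point $\mathbf1$ contributes equally to $F_j$ and $F_0$, so it cancels, and no restriction $x\in[0,1/2]^d$ is needed.

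The only delicate point is the half-open endpoints in the three-case analysis. The conditions must be non-strict inequalities $\le$, matching the half-open cells $A_j$. Checking the boundary $x_i=\rho(j_i+1)$ confirms that the factor is $0$ there, consistent with $A_j$ excluding its upper faces. The rest is routine bookkeeping.
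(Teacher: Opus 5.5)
Your proof is correct and follows the paper's argument: positivity comes from the lower bound of Lemma~\ref{lem:grid-masses} compared with Equation~\eqref{eq:epsilon-range}, normalization from Lemma~\ref{lem:cancellation} with $\xi=\mathbf0$, $\zeta=\mathbf1$, $w=1$, and the localized CDF from coordinatewise factorization of the alternating sum over the vertices of $A_j$. The only cosmetic difference is in the CDF step. You factorize directly over the real thresholds $\rho(j_i+s_i)\le x_i$, which absorbs the case $z_j\npreceq x$ into a zero factor. The paper instead treats that case separately and passes to integer indices $b_i=\lfloor x_i/\rho\rfloor$, so that it can invoke Lemma~\ref{lem:cancellation} with $\xi=j$, $\zeta=b$.
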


\begin{proof}
By Equations~\eqref{eq:mass-integral-1} and~\eqref{eq:epsilon-range},
\[
p(z_k)\ge\frac{d!}{4d^d}\rho^d\ge2\varepsilon
\qquad\text{for every }k\in\{0,\ldots,M\}^d.
\]
Thus $p(z_k)-\varepsilon\ge\varepsilon>0$ at every grid point.
Every probability assigned by $p_j$ to a point of $\mathcal V$ is
therefore positive, and $p_j(x)=0$ outside $\mathcal V$.
Since $p_j=p$ away from the vertices of $A_j$, we have
\[
\begin{aligned}
\sum_{x\in\mathcal V}p_j(x)
&=\sum_{x\in\mathcal V}p(x)
 +\sum_{s\in\{0,1\}^d}
 \bigl(p_j(z_{j+s})-p(z_{j+s})\bigr)\\
&=1+\varepsilon\sum_{s\in\{0,1\}^d}(-1)^{\sum_i s_i}=1.
\end{aligned}
\]
The second equality uses the normalization of $p$ and the definition
of $p_j$. The last equality follows from Lemma~\ref{lem:cancellation}
with $\xi=\mathbf0$, $\zeta=\mathbf1$, and $w=1$.
Hence $p_j$ defines a discrete probability distribution with support
exactly $\mathcal V$.
Fix $x\in[0,1]^d$. Only the probabilities at the vertices of $A_j$ have been modified,
so, by the definition of a CDF,
\[
\begin{aligned}
F_j(x)-F_0(x)
&=\sum_{\substack{v\in\mathcal V\\v\preceq x}}
 \bigl(p_j(v)-p(v)\bigr)\\
&=\sum_{\substack{s\in\{0,1\}^d\\z_{j+s}\preceq x}}
 \bigl(p_j(z_{j+s})-p(z_{j+s})\bigr)\\
&=\varepsilon\sum_{\substack{s\in\{0,1\}^d\\z_{j+s}\preceq x}}
 (-1)^{\sum_i s_i},
\end{aligned}
\]
where the last equality uses Equation~\eqref{eq:perturbed-mass}.
\par
If $z_j\npreceq x$, no vertex $z_{j+s}$ is dominated by $x$, where $s \in \{0,1\}^d$.
The sum is then empty and equals zero, and $x\notin A_j$.
Otherwise, set $b_i=\lfloor x_i/\rho\rfloor$ for each coordinate $i$.
Then $b\in\mathbb Z_{\ge0}^d$, $j\preceq b$, and
\[
z_{j+s}\preceq x
\quad\Longleftrightarrow\quad
j+s\preceq b,
\]
Applying Lemma~\ref{lem:cancellation} with $\xi=j$,
$\zeta=b$, and $w=1$, we obtain
\[
F_j(x)-F_0(x)
=\varepsilon\sum_{\substack{s\in\{0,1\}^d\\j+s\preceq b}}
(-1)^{\sum_i s_i}
=\varepsilon\mathbf1\{j=b\}.
\]
Finally,
\[
b=j
\quad\Longleftrightarrow\quad
\rho j_i\le x_i<\rho(j_i+1)\ \text{for every }i
\quad\Longleftrightarrow\quad x\in A_j.
\]
Together with the case $z_j\npreceq x$, this proves
Equation~\eqref{eq:localized-cdf} for every $x\in[0,1]^d$.
\end{proof}

\begin{lemma}
\label{lem:kl-comparison}
For every $j\in\{0,\ldots,M-1\}^d$ and every
$\varepsilon$ satisfying Equation~\eqref{eq:epsilon-range},
\begin{equation}
\label{eq:cell-history-kl}
\mathbb E_jN_j(T)\le\mathbb E_0N_j(T)
+2\varepsilon T\sqrt{\mathbb E_0N_j(T)}.
\end{equation}
\end{lemma}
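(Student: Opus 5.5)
We will use the standard change-of-measure argument for bandit lower bounds. The quantity $N_j(T)/T$ takes values in $[0,1]$ and is a function of the interaction history. A bounded-function version of Pinsker's inequality therefore gives
\[
\mathbb E_jN_j(T)-\mathbb E_0N_j(T)\le T\sqrt{\tfrac12\mathrm{KL}(\mathbb P_0,\mathbb P_j)}.
\]
It then suffices to show
\[
\mathrm{KL}(\mathbb P_0,\mathbb P_j)\le 8\varepsilon^2\,\mathbb E_0N_j(T),
\]
since $\sqrt{\tfrac12\cdot 8\varepsilon^2\mathbb E_0N_j(T)}=2\varepsilon\sqrt{\mathbb E_0N_j(T)}$, which is exactly \eqref{eq:cell-history-kl}.

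To bound the KL divergence we apply the chain rule (divergence decomposition) over the $T$ rounds. The algorithm's internal randomization has the same law under both instances, so it contributes nothing. Given the past, the feedback at round $t$ is $Z_t\sim\mathrm{Bernoulli}(F_0(x_t))$ under $\mathbb P_0$ and $\mathrm{Bernoulli}(F_j(x_t))$ under $\mathbb P_j$. Since the reward $Y_t$ is a deterministic function of $x_t$ and $Z_t$, this gives
\[
\mathrm{KL}(\mathbb P_0,\mathbb P_j)=\mathbb E_0\sum_{t=1}^T \mathrm{kl}\bigl(F_0(x_t),F_j(x_t)\bigr).
\]
By Lemma~\ref{lem:perturbed-cdf}, $F_j=F_0$ outside $A_j$, so those rounds contribute zero. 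For $x_t\in A_j$, the rounding argument used above shows $F_0(x_t)=F_0(z_j)$, so the Bernoulli parameters are $q=F_0(z_j)=1/(4g(z_j))$ and $q+\varepsilon$. Hence the sum collapses to $\mathbb E_0N_j(T)\,\mathrm{kl}(q,q+\varepsilon)$.

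The remaining step is a bound on the Bernoulli divergence, which is the only place the constant matters. Use $\mathrm{kl}(q,q')\le (q-q')^2/(q'(1-q'))$, the $\chi^2$ upper bound. Since $g(z_j)\in[1/2,1]$, we have $q\in[1/4,1/2]$. Moreover, \eqref{eq:epsilon-range} gives $\varepsilon\le 1/8$, so $q+\varepsilon\in[1/4,5/8]$ and $(q+\varepsilon)(1-q-\varepsilon)\ge 3/16$. Therefore $\mathrm{kl}(q,q+\varepsilon)\le 16\varepsilon^2/3\le 8\varepsilon^2$, which closes the argument. I expect no real obstacle; the point requiring care is that $F_0$ is constant on $A_j$ and the parameters stay bounded away from $0$ and $1$, which the lemmas already established ensure.
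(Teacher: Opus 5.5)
Your proposal is correct and follows essentially the same route as the paper. Both arguments combine the divergence decomposition, the Pinsker-based bound $|\mathbb E_jW-\mathbb E_0W|\le T\sqrt{\tfrac12\mathrm{KL}(\mathbb P_0,\mathbb P_j)}$ for $W\in[0,T]$, and the $\chi^2$ bound on the Bernoulli divergence with the perturbed parameter in $[1/4,5/8]$, and both arrive at the same constant $8\varepsilon^2$. Your appeal to the rounding argument for the constancy of $F_0$ on $A_j$ is slightly more careful than the paper, which cites Lemma~\ref{lem:cdf} even though that lemma only covers grid points.
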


\begin{proof}
Let $\operatorname{kl}(a,b)$ denote the KL divergence between Bernoulli
laws with parameters $a,b$, using natural logarithms.
The divergence decomposition of Lattimore and Szepesv\'ari~\cite[Lemma~15.1]{lattimore2020bandit}
gives
\begin{equation}
\label{eq:adaptive-kl}
\operatorname{KL}(\mathbb P_0,\mathbb P_j)
=\mathbb E_0\!\left[
\sum_{t=1}^T\operatorname{kl}(F_0(x_t),F_j(x_t))\right].
\end{equation}
Furthermore, for any random variable $W$ measurable with respect to the interaction history and satisfying $0\le W\le T$, we have
\begin{equation}
\label{eq:bounded-expectations}
\begin{aligned}
\left|\mathbb E_jW-\mathbb E_0W\right|
&=\left|\int_0^T
\bigl[\mathbb P_j(W>u)-\mathbb P_0(W>u)\bigr]\,du\right|\\
&\le\int_0^T
\left|\mathbb P_j(W>u)-\mathbb P_0(W>u)\right|\,du\\
&\le\int_0^T\sqrt{\tfrac12\operatorname{KL}(\mathbb P_0,\mathbb P_j)}\,du\\
&=T\sqrt{\tfrac12\operatorname{KL}(\mathbb P_0,\mathbb P_j)}.
\end{aligned}
\end{equation}
The last inequality follows from Equation~(14.12) in ~\cite{lattimore2020bandit}.
In addition, the KL divergence
between two Bernoulli distributions can be upper bounded as follows
\begin{equation}
\label{eq:bernoulli-kl-bound}
\operatorname{kl}(a,b)\le\frac{(a-b)^2}{b(1-b)},
\qquad a\in[0,1],\quad b\in(0,1).
\end{equation}

For $x\in A_j$, Lemma~\ref{lem:cdf} gives
$F_0(x)=F_0(z_j)=H(z_j)\in[1/4,1/2]$. By Lemma~\ref{lem:perturbed-cdf},
$F_j(x)=F_0(x)+\varepsilon$. Equation~\eqref{eq:epsilon-range}
implies $\varepsilon\le1/8$, so $F_j(x)\in[1/4,5/8]$ and
$F_j(x)(1-F_j(x))\ge3/16$. Equation~\eqref{eq:bernoulli-kl-bound}
then gives $\operatorname{kl}(F_0(x),F_j(x))\le8\varepsilon^2$.
Outside $A_j$ the feedback laws coincide. Applying
Equation~\eqref{eq:adaptive-kl}, we obtain
\[
\operatorname{KL}(\mathbb P_0,\mathbb P_j)
\le8\varepsilon^2\mathbb E_0\!\left[
\sum_{t=1}^T\mathbf1\{x_t\in A_j\}\right]
=8\varepsilon^2\mathbb E_0N_j(T).
\]
Equation~\eqref{eq:bounded-expectations} with $W=N_j(T)\in[0,T]$
therefore gives
\[
\mathbb E_jN_j(T)
\le\mathbb E_0N_j(T)
+T\sqrt{\tfrac12\operatorname{KL}(\mathbb P_0,\mathbb P_j)}
\le\mathbb E_0N_j(T)+2\varepsilon T\sqrt{\mathbb E_0N_j(T)},
\]
proving the claim.
\end{proof}

\end{document}